\renewenvironment{abstract}
 { \normalsize
  \list{}{\setlength{\leftmargin}{.0cm}%
    \setlength{\rightmargin}{\leftmargin}}%
  \item {\bf \abstractname.}\relax}
 {\endlist}
\theoremstyle{plain}
\newtheorem{thm}{Theorem}[section]
\newtheorem{prop}[thm]{Proposition}
\newtheorem{lem}[thm]{Lemma}
\newtheorem{coro}[thm]{Corollary}
\theoremstyle{definition}
\newtheorem{defi}[thm]{Definition}
\newcommand{\Nat}{\mathbb{N}}
\newcommand{\restr}{\upharpoonright}  
\newcommand{\un}{\uparrow} 
\newcommand{\de}{\downarrow} 
\DeclarePairedDelimiter{\tuple}{\langle}{\rangle}
\newcommand{\sqbrad}[2]{\{\hspace{0.03cm}{#1} : {#2}\hspace{0.03cm}\}}
\DeclarePairedDelimiter{\dbra}{\llbracket}{\rrbracket}
\newcommand{\bigo}[1]{\textsf{O}\hspace{0.01cm}\big({#1}\big)}
 \newcommand{\mybox}[1]{\begin{tcolorbox}[width=\textwidth,colback={gray!10!white},boxrule=1pt, parbox=false] {#1}\end{tcolorbox}}
\newcommand{\abs}[1]{|{#1}|}
\newcommand{\parb}[1]{\big({#1}\big)}
\newcommand{\ml}{Martin-L\"{o}f }
\newcommand{\pz}{$\Pi^0_1$\ }
\newcommand{\ie}{i.e.\ }
\newcommand{\ce}{c.e.\ }
\newcommand{\pf}{prefix-free }
\newcommand{\twome}{2^{\omega}}
\newcommand{\zj}{\emptyset'}
\newcommand{\twomel}{2^{<\omega}}
\newcommand{\wedga}{\ \wedge\ \ }
\newcommand{\leqT}{\leq_T}
\newcommand{\geqT}{\geq_T}
\newcommand{\impl}{\implies}
\newcommand{\hthree}{\hspace{0.3cm}}
\newcommand{\sz}{$\Sigma^0_1$\ }
\newcommand{\szn}{$\Sigma^0_1$}
\newtheorem*{thme}{Theorem}
\newcommand{\subseteqm}{\subseteq\hspace{-0.04cm}}
\newcommand{\inv}{^{-1}}
\newcommand{\zerome}{0^{\omega}}
\title{Computable one-way functions on the reals\thanks{
Authors are in alphabetical order. We thank L. Levin for his guidance and suggestions.
Supported by Beijing Natural Science Foundation (IS24013).}}
\author{George Barmpalias} \author{Xiaoyan Zhang}  
\affil{State Key Lab of Computer Science\\ \vspace{0.1cm} 
Institute of Software, Chinese Academy of Sciences\\ \vspace{0.1cm}
University of  Chinese Academy of Sciences}
\begin{document}
\maketitle
\begin{abstract}
A major open problem in computational complexity is the existence of a one-way function, 
namely a function from strings to strings which  is computationally easy to compute  but hard to invert.
Levin (2023) formulated the notion of one-way functions
from reals (infinite bit-sequences) to reals  in terms of computability, and 
asked whether partial computable one-way functions exist. 
We give a strong positive answer using the hardness of the halting problem and
exhibiting a total computable one-way function.
\end{abstract}

\section{Introduction}\label{Ugcvsz7Hdh}
A function is {\em one-way} if it is computationally easy to compute but hard to invert, even probabilistically.
Many aspects of computer science  such as computational complexity, pseudorandom generators and digital signatures
rely on one-way functions from {\em strings} (finite binary sequences) to strings \cite{levin24ZF, Levin2003}. Although their existence is not known,
much research effort has  focused on their implications and relations with other fundamental  problems in computation
\cite{Impagliazzo1989,Impagliazzo1990, Hstad1999}. Recent research has established strong connections between one-way functions and
Kolmogorov complexity \cite{Segev2023, Liu2020,Liu2023, Hirahara2023}.

Levin \cite[\S 2.2]{levin24ZF} asked for the existence of a one-way function 
from {\em reals} (infinite binary sequences) to reals, which he defined in terms of
 probabilistic computability  with respect 
 to the uniform measure $\mu$.\footnote{This was also discussed in \cite{LevinEmailDec23}.} 
 These are partial computable functions $f$ 
 which  preserve algorithmic randomness and 
\[
\textrm{the probability that $M$ inverts  $f$, namely $f(M(y))=y$, is zero}
\]
for each probabilistic Turing machine $M$. We give a positive answer:

\begin{thme}
There exists a total computable one-way surjection $f$.
\end{thme}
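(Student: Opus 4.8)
The plan is to reduce the one-wayness to a single computability-theoretic property of $f$, obtain that property from a classical measure argument, and thereby push essentially all the work into \emph{constructing} the right $f$. Concretely, I will build a total computable \emph{measure-preserving} map $f\colon 2^{\omega}\to 2^{\omega}$ (so $f_{*}\mu=\mu$ for the uniform measure $\mu$) with two further features: (a) $f$ is injective off a null set, so that for $\mu$-almost every $y$ the fibre $f^{-1}(y)$ is a single point $g(y)$; and (b) for almost every $x$ one has $x\not\leq_T f(x)$, i.e. the measurable inverse $g$ is \emph{non-computable} on a set of full measure. Two of the three demanded properties are then automatic. Surjectivity: the image $f(2^{\omega})$ is compact, hence closed, and has measure $1$; a proper closed subset of $2^{\omega}$ omits a nonempty open set and so has measure $<1$, whence $f(2^{\omega})=2^{\omega}$. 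Randomness preservation: if $\{U_n\}$ is a Martin-L\"of test, then $\{f^{-1}(U_n)\}$ is uniformly $\Sigma^0_1$ (as $f$ is computable) with $\mu(f^{-1}(U_n))=\mu(U_n)\leq 2^{-n}$, so it is again a test; hence $f(x)$ non-random forces $x$ non-random, and $f$ maps randoms to randoms.

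Next comes the inversion bound. Let $M$ be any randomized oracle machine; on input $y$ with coin sequence $r$ its output satisfies $M^{r}(y)\leq_T y\oplus r$. Because $f$ is measure preserving, drawing $y$ as $f(x)$ with $x\sim\mu$ is the same as drawing $y\sim\mu$ directly. A successful inversion means $M^{r}(y)\in f^{-1}(y)$, and for almost every $y$ this fibre is the singleton $\{g(y)\}$, so success forces $g(y)=M^{r}(y)\leq_T y\oplus r$. I then invoke the relativized theorem of de Leeuw--Moore--Shannon--Shapiro (Sacks): for reals $a,y$ with $a\not\leq_T y$ the set $\{r: a\leq_T y\oplus r\}$ is $\mu$-null. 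Applying this with $a=g(y)$, which by (b) satisfies $g(y)\not\leq_T y$ for almost every $y$, the set of coin sequences $r$ witnessing a successful inversion is null for almost every $y$; by Fubini the probability of inversion is $0$. Thus any $f$ with properties (a) and (b) is one-way, and the theorem reduces to constructing such an $f$.

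The construction of $f$ is the real content, and this is where the hardness of the halting problem enters: the inverse must encode $\emptyset'$, so I aim for $x\leq_T f(x)\oplus\emptyset'$ (so that $\emptyset'$ suffices to invert) while $x\not\leq_T f(x)$ for almost every $x$. I will realize $f$ as an infinite composition of \emph{local} measure-preserving permutations of clopen cylinders, organized by a movable-marker construction driven by a computable approximation $\{\emptyset'_s\}$. The forward map is kept computable and total by bounded use: each output bit of $f(x)$ is decided from a finite prefix of $x$, the reshuffling affecting only where \emph{later} input bits are routed and never requiring an appeal to $\emptyset'$ to emit a bit. The enumeration order of $\emptyset'$ is coded into this routing so that recovering $x$ from $y=f(x)$ amounts to recovering, at infinitely many stages, the settling times of the approximation, i.e. $\emptyset'$ itself. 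Being a limit of permutations of cylinders, $f$ is measure preserving, and it is injective except on the null set of points at which infinitely many reshufflings accumulate, giving (a).

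The main obstacle is establishing (b) uniformly for almost every $x$ rather than for a single real, \emph{simultaneously} with forward computability: $f$ must reveal enough of $x$ in its finite approximations to stay continuous and total, yet hide exactly the coordinate information whose recovery requires $\emptyset'$. I would resolve this by a density argument. Suppose toward a contradiction that $x\leq_T f(x)$ on a set $P$ with $\mu(P)>0$. Since $f$ preserves measure, $f(x)$ is Martin-L\"of random for almost every $x$, and the set of reals computing $\emptyset'$ is null, so for almost every $x\in P$ the oracle $f(x)$ does not compute $\emptyset'$. On the other hand, at a density point of $P$ the hypothesis $x\leq_T f(x)$ lets one decode the routing from $f(x)$ and thereby read off the settling times coded there, computing $\emptyset'$ from $f(x)$ and contradicting the previous sentence; hence $\mu(P)=0$, which is (b). Verifying that the coding is robust enough to force this decoding on a set of full measure, while the local permutations keep $f$ genuinely computable and measure preserving, is where the delicate bookkeeping of the construction lies.
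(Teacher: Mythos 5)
There is a genuine gap, and it is fatal to the whole plan rather than to the bookkeeping: your properties (a) and (b) are jointly unsatisfiable for a \emph{total} computable $f$, and (a) alone already destroys one-wayness. By a compactness argument (this is Theorem~\ref{BZwElndruD} of the paper), every total computable $f:\twome\to\twome$ admits a \emph{partial computable} inversion $g$ on the set $E:=\sqbrad{y}{\abs{f\inv(y)}=1}$: from a computable representation $\hat f$, for each $\tau\prec y$ one computes the length-$n$ strings consistent with mapping into $\dbra{\tau}$, and once a unique such string persists (which happens cofinally along $y$ whenever the fibre is a singleton, by K\"onig's lemma) it must be a prefix of the unique preimage. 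If, as in your (a), the fibre $f\inv(y)$ is a singleton for $\mu$-almost every $y$, then $\mu(E)=1$ and this deterministic $g$ --- using no coins $r$ at all --- inverts $f$ with probability $1$. In particular the measurable inverse satisfies $g(y)\leqT y$ for almost every $y$, directly refuting your (b) that $x\not\leqT f(x)$ almost everywhere; so no movable-marker construction, however delicate, can meet your specification. Your intuition that totality and continuity force $f$ to ``reveal enough of $x$'' is exactly the obstruction, and it is provable, not merely delicate. (For \emph{partial} computable $f$ your picture is tenable --- the paper constructs an injective partial computable $f$ all of whose inversions compute $\zj$ --- but the theorem demands a total $f$.) Indeed the paper proves the strong converse of your (a): weakly one-way total computable functions are almost nowhere injective (Corollary~\ref{SXcsEXsjOj}), and their fibres are uncountable perfect sets almost everywhere in the range (Theorem~\ref{9j2ZZuHXST}).

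For contrast, the paper's construction (Theorem~\ref{649e64aa}) keeps the fibres huge instead of trying to make them singletons: $f$ copies bit $x(2n)$ into output position $\tuple{n,s}$ exactly when $n\in\zj_{s+1}-\zj_s$, and otherwise copies fresh odd-indexed bits, so $f$ is a bit-relocation along a computable injection $p$; your pushforward-of-tests argument for randomness preservation and your closed-image argument for surjectivity are essentially the paper's Lemma~\ref{VoqXmSWyYE}. Hardness then comes not from non-computability of a measurable inverse but from the Lebesgue density theorem applied to the success set $L=\sqbrad{y\oplus r}{f(g(y,r))=y}$: if $y\oplus r$ is a density-one point of $L$ and $u$ is the oracle-use of $g(y,r;2n)$, then any $n$ entering $\zj$ at a stage $s>u$ would force all of $\dbra{(y\oplus r)\restr_{2s}}\cap L$ to agree on the bit at position $2\tuple{n,s}$, which is impossible at relative measure above $1/2$; hence $n\in\zj$ iff $n\in\zj_{\max\{u,k\}}$, i.e.\ $\zj\leqT y\oplus r$. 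Since $\sqbrad{z}{\zj\leqT z}$ is null --- the same de~Leeuw--Moore--Shannon--Shapiro/Sacks theorem you invoke --- density forces $\mu(L)=0$. So the salvageable part of your outline is the Sacks/Fubini endgame; the part that must be replaced is the reduction to (a)+(b), substituting ``every successful inverter's use function lets $y\oplus r$ compute $\zj$'' for ``the a.e.-unique inverse is non-computable.''
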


The hardness of inverting $f$ is based on the hardness of the halting problem $\zj$. As a result, 
every randomized continuous inversion of $f$ computes $\zj$.

\citet{Gacs24oneway} independently  constructed  $f$
which is `one-way'  with respect to  the domain instead of the range of $f$:
for each partial computable $g$, 
\begin{equation}\label{QnjppKpZCa}
\mu\parb{\sqbrad{(x,r)}{f(g(f(x),r))=f(x)}}=0
\end{equation}
and $f$ is partial computable with domain of positive uniform measure $\mu$.

Although \eqref{QnjppKpZCa} appears to be a closer analogue to the
one-way functions in computational complexity, it does not preserve randomness, which is required
by Levin \cite{levin24ZF}. Effective partial maps that meet \eqref{QnjppKpZCa} often have a null image.
Levin's definition implies \eqref{QnjppKpZCa} while the converse fails \cite[\S 3.2]{owrand24}.

In \S\ref{4GY8Jx1fU} we justify Levin's definition as the correct analogue of one-way functions on discrete domains,
over weaker formalizations. We exhibit a one-way real function using a framework 
that can be adapted toward meeting additional conditions, and explore their properties.

We do not know if there is an injective one-way real function. 
In \S\ref{3p5CxCWq7F}
we explore the extent to which injectivity is compatible with one-way real functions and
end with a summary and some problems in \S\ref{UWucNoldLo}.

\section{Preliminaries}\label{WlGeuf4gYD}
Let $\Nat$ be the set of natural numbers, represented by $n,m, i,j,t,s$. Let 
\begin{itemize}
\item $2^\omega$ be the set of  reals, represented by variables $x,y,z,v,w$
\item $2^{<\omega}$ the set of  strings, represented by variables $\sigma,\tau, \rho$.
\end{itemize}
We index the bits $x(i)$ of $x$ starting from $i=0$. Let 
\begin{itemize}
\item  $x\restr_n$ be the $n$-bit prefix $x(0)x(1)\cdots x(n-1)$ of $x$ 
\item $\preceq$, $\prec$ be the prefix and strict prefix relation between strings
\item $\sigma\mid \tau$  denotes that $\sigma\not\preceq\tau\ \wedge\ \tau\not\preceq\sigma$
\end{itemize}
and $\succeq, \succ$ denote the suffix relations which, along with the prefix relations can also apply between a string and a real.
Given  $x, y$ let
\[
\textrm{ $x\oplus y:=z$ where $z(2n)=x(n)$ and $z(2n+1)=y(n)$}
\]
and similarly for strings of the same length. 

\subsection{Computability and randomness}
The \textit{Cantor space} is $2^\omega$  with the topology generated by the 
basic open sets
\[
\dbra{\sigma}:=\sqbrad{z\in\twome}{\sigma\prec z}
\hspace{0.3cm}\textrm{for $\sigma\in 2^{<\omega}$}
\]
which we  call {\em cylinders}.
Let $\mu$ be the  \textit{uniform measure} on $\twome$, determined by $\mu(\dbra{\sigma})=2^{-|\sigma|}$.  
Probability in $\twome\times\twome$ is reduced
to $\twome$ via the measure-preserving  $(x,y)\mapsto x\oplus y$. Classes  of $\mu$-measure 0
are called {\em null}.

A {\em tree} is a downward $\preceq$-closed  $T\subseteq\twomel$ and 
\begin{itemize}
\item  $T$ is {\em pruned} if every $\sigma\in T$ has an extension in $T$ 
\item  $T$ is {\em perfect} if every $\sigma\in T$ has two extensions $\tau|\rho$ in $T$.
\item $x$ is a {\em path} through $T$ if all of its prefixes belong to $T$.
\end{itemize}
Let $[T]$ be the class of all paths through $T$. 

We use the standard notion of relative computability  in terms of Turing machines with oracles from $\twome$.
Turing reducibility $x\leqT z$ means that $x$ is computable from $z$ (is {\em $z$-computable}) and is a preorder  calibrating  $\twome$ according to computational power
in the Turing degrees. 

Effectively open sets,  also known as $\Sigma^0_1$ classes, are subsets of $\twome$ of the form
\[
\bigcup_i\dbra{\sigma_i}\hspace{0.3cm}\textrm{where $(\sigma_i)$ is computable.}
\]
Effectively closed sets or \pz classes are the complements of \sz classes. 
Every \pz class is the set of paths through some computable tree, and vice versa. 

If $(\sigma_{n,i})$ is computable then 
\begin{itemize}
\item classes $V_n=\bigcup_i\dbra{\sigma_{n,i}}$ are called uniformly $\Sigma^0_1$ 
\item $\bigcap_n V_n$ is called a $\Pi^0_2$ class and its complement is a $\Sigma^0_2$ class.
\end{itemize}
Equivalently,  $V\in \Pi^0_2$ iff there is a computable predicate $P$ such that 
\[
x\in V\iff \forall n\ \exists s\ P(x\upharpoonright_n,x\upharpoonright_s).
\]
Relativization to an oracle $r$ defines $\Sigma^0_1(r), \Pi^0_2(r)$ classes and so on.

A \textit{Martin-L\"of test} is a uniformly $\Sigma^0_1$ sequence $(V_n)$ with $\mu(V_n)\leq 2^{-n}$. 
\begin{defi}
Given $k\in\Nat$, a  real $x$ is 
\begin{itemize}
\item \textit{random} if $x\notin\bigcap_n V_n$ for any Martin-L\"of test $(V_n)$
\item \textit{weakly random} if it is in every $\Sigma^0_1$ set of measure $1$. 
\item \textit{weakly $k$-random} if it is in every $\Sigma^0_k$ set of measure $1$. 
\end{itemize}
Relativization to  oracle $r$ defines $r$-random, weakly $r$-random  and so on.
\end{defi}
By the countable additivity of the uniform measure $\mu$:
\begin{equation}\label{VbNXoA2gzg}
\textrm{$x$ is weakly $r$-random iff it is not in any null $\Sigma^0_2(r)$ class}
\end{equation}
and similarly for weakly $k$-random reals $x$.

\subsection{Computable analysis}
A Turing machine with a one-way infinite output tape and access to an oracle from $\twome$ may eventually print
an infinite binary sequence, hence defining a  partial map from reals to reals. This  standard notion of computability of real functions 
\cite{PR89} is almost as old as computability itself  \cite{Gherardi11T} and implies that computable real functions are continuous.
Let
\begin{itemize}
\item $f:\subseteqm \twome\to\twome$ denote that $f$ is a partial map from $\twome$ to $\twome$
\item  $f(x)\de, f(x)\un$ denote that $f(x)$ is defined or undefined
\item $f(x;n):=f(x)(n)$ denote bit $n$ of $f(x)$
\end{itemize}
and $f(x;n)\de, f(x;n)\un$ denote that $f(x;n)$ is defined or undefined. 

If $f$ is partial computable, the {\em oracle-use} of $f(x;n)$  
 is the prefix of $x$ that has been read by the underlying Turing machine at the time where
when  $f(x;n)$ is printed on position $n$ of the one-way output tape.
\begin{defi}
We say that $f:\subseteqm \twome\to\twome$ is {\em  random-preserving} if $f(x)$ is random for each random $x$
in the domain of $f$.
\end{defi}
The range and the inverse images of $f$ are denoted by
\[
f(\twome):=\sqbrad{y}{\exists x,\ f(x)=y}
\hspace{0.3cm}\textrm{and}\hspace{0.3cm}
f\inv(y):=\sqbrad{x}{f(x)=y}.
\]
If $f$ is total and continuous  the inverse images
\[
f^{-1}(\dbra{\sigma}):=\sqbrad{x}{\sigma\prec f(x)}
\]
are closed and effectively closed if $f$ is also computable. Continuous functions $f: \subseteq \twome\to \twome$ 
can be defined via a {\em representation}: a $\subseteq$-monotone map between cylinders. 
Formally, let $\lim_{\tau\prec x} \hat f(\tau)=z$ denote that
\[
\forall \tau\prec x,\ \hat f(\tau)\prec z
\hspace{0.5cm}\textrm{and}\hspace{0.5cm}
 \lim_{\tau\prec x} \abs{\hat f(\tau)}=\infty.
\]
Given a continuous $f:\subseteqm\twome\to\twome$ we say that $g:\subseteqm\twome\to\twome$:  
\begin{itemize}
\item {\em  inverts $f$ on} $y$  if  $g(y)\de$ and $f(g(y))\de=y$
\item is an {\em inversion of $f$} if it is continuous and inverts $f$ on all $y\in f(\twome)$.
\end{itemize}
We say that $\hat f: \twomel\to \twomel$  is a {\em representation of $f$}  if
\begin{itemize}
\item $\sigma\preceq\tau \impl \hat f(\sigma)\preceq \hat f(\tau)$
\item $f(x)\de\iff  \lim_{\tau\prec x} \hat f(\tau)= f(x)\iff \lim_{\tau\prec x} \abs{\hat  f(\tau)}=\infty$.
\end{itemize}
Every continuous $f: \subseteq \twome\to \twome$ has a representation.
Every partial computable $f$  has a computable representation.
Under an effective coding of the graphs of representations 
into $\twome$ we identify them with their codes.
\begin{defi}
We say that a continuous $f:\subseteqm \twome\to\twome$  {\em computes $z$} and  denote it by $z\leqT f$ 
if every representation  of $f$ computes $z$.
\end{defi}
Note that by \cite{Miller2004} there is no {\em canonical way} to assign a 
representation to each continuous $f$: there are continuous $f$ such that
for each representation $z$ of $f$ there is another one of lesser Turing degree.

\section{One-way functions}\label{4GY8Jx1fU}
Toward foundational research that is relevant to mathematical practice, Levin proposed a new 
direction \cite{levin24ZF}.\footnote{This includes 15 drafts developed  from 2022 to the end of 2024.} 
He used invertibility of real functions to express the axioms of choice and powerset, 
and emphasized the significance of its computational hardness in the context of his proposal.

 He formally extends the notion of one-way functions to continuous domains and he asks about their existence. 
One may formalize this notion in several ways which are not equivalent, especially in the case of partial computable maps. 
In \S\ref{2JVBnaASm} we give the formal definition from \cite{levin24ZF} and explain why it is the correct
 extension of the discrete notion over weaker alternatives.

In \S\ref{W6lu6GL2ui} we construct a total computable one-way surjection. 
We do not know whether there is a one-way injection on the reals. 
In \S\ref{IDdXsdgbYm} we establish properties
of one-way real functions relating to this question, setting the basis for the comprehensive analysis in
\S\ref{3p5CxCWq7F}.

\subsection{Levin's one-way functions on the reals}\label{2JVBnaASm}
In computational complexity a function from strings to strings is 
{\em one-way} if it is  easy to compute and hard to invert, even probabilistically.

By Levin \cite{levin24ZF} a one-way real function $f$  must 
\begin{enumerate}[\hthree(a)]
\item be partial computable with domain of positive measure
\item have no effective probabilistic inversion
\item satisfy $\mu(f\inv(\dbra{\tau}))=\bigo{\mu(\dbra{\tau})}$.
\end{enumerate}
Conditions (a), (b) clearly correspond to conditions in the discrete one-way functions.
Probabilistic inversions of $f$ are facilitated by
$g:\subseteq\twome\times\twome\to\twome$
where the secondary argument represents access to a randomness source.
However probability in (b) may refer to the domain or the range of $f$.

 Levin \cite{levin24ZF} chose the latter, interpreting ``$g$ inverts $f$'' as the event
 that 
\begin{equation}\label{VLu8JMYa7K}
\textrm{$f(g(y,r))\de =y$ on  randomly chosen $y\in f(\twome)$, $r\in\twome$.}
\end{equation}
Formally, the {\em probability that $g$ inverts $f$}  is the measure of
\[
L_{f}(g):=\sqbrad{y\oplus r}{g(y,r)\de\wedga f(g(y,r))\de=y}
\]
and $g$  is a {\em randomized inversion} of $f$ if  $\mu(L_{f}(g))>0$.
\begin{defi}[Levin \cite{levin24ZF}]\label{vCsTo1DfVQ}
We say that $f:\subseteqm\twome\to\twome$ is {\em one-way} if it 
\begin{itemize}
\item is partial computable, random-preserving with positive domain
\item does not have any partial computable randomized inversion.
\end{itemize}
\end{defi}
The alternative is condition \eqref{QnjppKpZCa} of G{\'a}cs \cite{Gacs24oneway} which, as discussed in \S\ref{Ugcvsz7Hdh},
is strictly weaker if we assume that $f$ is random-preserving.

The mild measure-preservation condition (c) 
is equivalent to randomness-preservation
and implies  $\mu(f(\twome))>0$  for partial computable $f$ (see \cite[\S 3.2]{owrand24}).
To see why (c) is essential  consider the weaker conditions:
\begin{enumerate}[(i)]
\item $\mu(f(E))>0$ for each subset $E$ of the domain of $f$ with $\mu(E)>0$
\item the domain of $f$ contains $R$ with $\mu(R)>0$ and $\mu(f(R))>0$
\item there is a  $\zj$-random $x$ such that $f(x)$ is random 
\end{enumerate}
and note that (c)$\to$(i)$\to$(ii)$\to$(iii) while none of the converses holds.
It is not hard to show (see \cite[\S 3.2]{owrand24}) that each (i), (ii), (iii) is  equivalent to (c)
up to effective restrictions of $f$. So essentially (c) only asks that $f$ maps at least one sufficiently random real to a random real.

Weaker versions based on combinations of (i), (ii), (iii), $\mu(f(\twome))>0$ 
are less robust  as  they do not form a discernible hierarchy \cite{owrand24}.

\subsection{Construction of a total one-way surjection}\label{W6lu6GL2ui}
We will map input $x$ to a permutation $y$ of selected bits of $x$.
Let 
\begin{itemize}
\item $\tuple{\cdot, \cdot}: \Nat\times\Nat\to\Nat$ be a computable bijection with $\tuple{n,s}\geq s$
\item $(\zj_s)$ be an effective enumeration of $\zj$ without repetitions. 
\end{itemize}
We give a simplified construction which we extend in \S\ref{W6lu6GL2ui} for the full result.
\begin{prop}\label{9w41OY8XJy}
There exists a total computable  $f:\twome\to\twome$ such that any inversion of $f$ computes $\zj$.
\end{prop}\begin{proof}
We show that the following $f$  has the required properties:
\[
f(x;\tuple{n,s}):=
\begin{cases}
x(n) & n\in \zj_s-\zj_{s-1} \\
\ 0 & \rm{otherwise.}
\end{cases}
\]
Clearly $f$ is total computable.
Assuming that 
$g:\subseteqm \twome\to\twome$ is an inversion of $f$, we show how to decide if 
$n\in \zj$ using the computation of  $g(0^{\omega};n)$.

Since $y:=0^{\omega}$ is in the range of $f$, the real $x:=g(y)$ is defined.
Let $u_n$ be the oracle-use in the computation of $g(y;n)$, that is, the minimum $u$ such that the computation of $g(y;n)$ reads only $y\upharpoonright_u$. 
It remains to show that 
\begin{enumerate}[\hthree(a)]
\item if $x(n)=1$ then $n\not\in \zj$
\item if $x(n)=0$ then $n\in \zj\iff n\in \zj_{u_n}$. 
\end{enumerate}
If $x(n)=1$ then  $\forall s\ f(x;\tuple{n,s})=y(\tuple{n,s})=0$ so 
$\forall s\  n\not\in \zj_s$ and $n\not\in \zj$.

For (b) assume that $x(n)=0$ and for a contradiction let $n\in \zj_s-\zj_{s-1}$ for some $s>u_n$.
Then there is $z\in\twome$ with $z=f(0^n 1 0^{\omega})=0^{\tuple{n,s}} 1 0^{\omega}$.

Since $u_n$ is the oracle-use of $g(y;n)\de$ and $\tuple{n,s}\geq s$
we get $0^{u_n}\prec z$ and
\[
g(z; n)=g(0^{\omega}; n)=x(n)=0.
\]
This gives the contradiction $1=z(\tuple{n,s})=f(g(z),\tuple{n,s})=g(z;n)=0$.
\end{proof}

Our one-way function will be a permutation of selected bits of the input. We show that such maps meet
certain properties required of one-way functions.
\begin{lem}\label{VoqXmSWyYE}
If $p:\Nat\to\Nat$ is a computable injection then $f:\twome\to\twome$ with 
$f(x;n):=x(p(n))$ is a total computable random-preserving surjection.
\end{lem}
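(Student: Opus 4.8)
The plan is to dispatch totality, computability, and surjectivity directly, and then concentrate on random-preservation, which is the only part with content. Since $p$ is a computable injection, the functional $x\mapsto f(x)$ with $f(x;n)=x(p(n))$ is computed bit by bit: to output bit $n$ one computes $p(n)$ and reads $x(p(n))$, which is everywhere defined, so $f$ is total computable. For surjectivity, given any $y\in\twome$ I would build a preimage by setting $x(p(n)):=y(n)$ for every $n$ and $x(m):=0$ for $m$ outside the range of $p$; these assignments are consistent precisely because $p$ is injective, so the positions $p(n)$ are pairwise distinct. Then $f(x;n)=x(p(n))=y(n)$, hence $f(x)=y$. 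Since $f$ is total, ``each random $x$ in the domain'' means every random real.

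For random-preservation I would argue by pulling back a \ml test, and the crux is that $f$ is measure-preserving. First I would check that for every $\sigma\in\twomel$ the set $f\inv(\dbra{\sigma})=\sqbrad{x}{x(p(i))=\sigma(i)\text{ for all }i<|\sigma|}$ is a clopen set of measure exactly $2^{-|\sigma|}$: it constrains the $|\sigma|$ coordinates $p(0),\dots,p(|\sigma|-1)$ of $x$, and these are pairwise distinct by injectivity, so the constraints are independent and the measure is $2^{-|\sigma|}=\mu(\dbra{\sigma})$. Agreement of the pushforward $f_*\mu$ with $\mu$ on all neighborhoods then yields $f_*\mu=\mu$, that is, $\mu(f\inv(A))=\mu(A)$ for every measurable $A$.

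With measure-preservation in hand the rest is bookkeeping. Suppose for contradiction that $x$ is random but $f(x)$ is not, and fix a \ml test $(V_n)$ with $f(x)\in\bigcap_n V_n$, say $V_n=\bigcup_i\dbra{\sigma_{n,i}}$ with $(\sigma_{n,i})$ computable. Put $U_n:=f\inv(V_n)=\bigcup_i f\inv(\dbra{\sigma_{n,i}})$. Each $f\inv(\dbra{\sigma_{n,i}})$ is a computable finite union of cylinders (the bits of $\sigma_{n,i}$ placed at positions $p(0),p(1),\dots$), so $(U_n)$ is uniformly \szn, and $\mu(U_n)=\mu(V_n)\leq 2^{-n}$ by measure-preservation. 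Hence $(U_n)$ is a \ml test, and $f(x)\in V_n$ gives $x\in U_n$ for all $n$, contradicting the randomness of $x$. Therefore $f(x)$ is random whenever $x$ is.

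I expect the one genuinely load-bearing step to be the measure-preservation computation, and specifically the observation that injectivity of $p$ is exactly what makes it work: without it some bit of $f(x)$ would be forced to duplicate another, $f\inv(\dbra{\sigma})$ could be empty for suitable $\sigma$, and $f$ would be neither surjective nor measure-preserving — indeed $f(x)$ could never be random. Everything else reduces to routine verification of the uniform \szn-ness of the pulled-back test.
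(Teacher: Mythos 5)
Your proof is correct and takes essentially the same approach as the paper: surjectivity via the explicit preimage that sets $x(p(n)):=y(n)$ and $0$ elsewhere, and random-preservation by pulling back Martin-L\"{o}f tests, with injectivity of $p$ supplying the key computation $\mu(f^{-1}(\dbra{\sigma}))=2^{-|\sigma|}$. The only (harmless) difference is bookkeeping: the paper pulls back a single universal test with prefix-free members and bounds the measure by subadditivity, whereas you pull back an arbitrary test covering $f(x)$ and invoke uniqueness of Borel measures agreeing on all cylinders to get the full pushforward identity $f_{*}\mu=\mu$; both routes are standard and interchangeable here.
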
\begin{proof}
Clearly $f$ is total computable. For each $y$ the real
\[
x(m):=\begin{cases}
y(n) & \textrm{if $m=p(n)$} \\
\ \ 0 & \rm{otherwise.}
\end{cases}
 \]
satisfies $f(x)=y$. So $f$ is surjective.
Let  $(V_i)$ be a universal \ml test with \pf and uniformly \ce members $V_i\subseteq\twomel$.
Then 
\[
f\inv(\dbra{V_i})=\bigcup_{\tau\in V_i} f\inv(\dbra{\tau}).
\]
Since $f$ is computable the sets $f\inv(\dbra{V_i})$ are uniformly \szn. Also
\[
f^{-1}(\dbra{\tau})=\sqbrad{x}{\forall i<|\tau|,\ x(p(i))=\tau(i)}
\]
and since $p$ is injective, $\mu(f\inv(\dbra{\tau}))=2^{-|\tau|}=\mu(\dbra{\tau})$. So
\[
\mu\parb{f\inv(\dbra{V_i})}\leq \sum_{\tau\in V_i}  \mu\parb{f\inv(\dbra{\tau})} =
\sum_{\tau\in V_i}  \mu\parb{\dbra{\tau}} =\mu(V_i).
\]
and $(f\inv(\dbra{V_i}))$ is a \ml test.
Since the reals $f$-mapping into $V_i$
are in $f\inv(\dbra{V_i})$ and $(V_i)$ is universal, every real $f$-mapping to a non-random real is non-random.
So $f$ is random-preserving.
\end{proof}

We extend the argument in  Proposition \ref{9w41OY8XJy} to obtain a one-way function.

\begin{thm}\label{649e64aa}
There is a total computable  random-preserving one-way surjection, such that every randomized inversion of it computes $\zj$. 
\end{thm}\begin{proof}
Let $f(x;\tuple{n,s}):=x(p(\tuple{n,s}))$ where
\[
p(\tuple{n,s}):=\begin{cases}
\ \ \ \ 2n &  \textrm{if $n\in \zj_s-\zj_{s-1}$} \\[0.13cm]
\ 2\tuple{n,s}+1 & \rm{otherwise}
\end{cases}
\]
so $f$ is total computable and 
\[
f(x;\tuple{n,s}):=\begin{cases}
\ \ \ \ x(2n) & \textrm{if $n\in \zj_s-\zj_{s-1}$} \\[0.13cm]
\ x(2\tuple{n,s}+1) & \rm{otherwise.}
\end{cases}
\]
Since $p$ is a computable injection, Lemma \ref{VoqXmSWyYE}
implies that $f$ is a total computable random-preserving surjection. 
Given $g:\subseteqm\twome\times\twome\to\twome$ and
 \[
 L:=\sqbrad{y\oplus r}{f(g(y,r))=y}
\hspace{0.3cm}\textrm{with \ $\mu(L)>0$}
 \]
it remains to  show that $g$ computes $\emptyset'$. Fix $\sigma$ such that 
\begin{equation}\label{da08d1d9}
\mu(L\cap\llbracket\sigma\rrbracket)>\frac{3}{4}\cdot \mu(\llbracket\sigma\rrbracket)
\end{equation} 
which exists by the Lebesgue density theorem \cite[Theorem 1.2.3]{rodenisbook}  
or by simply taking an open cover $\llbracket V\rrbracket$ of $L$ such that $\mu(\llbracket V\rrbracket-L)<\mu(L)/3$, 
where $V=\{\sigma_0,\sigma_1,\dots\}$ is prefix-free. Then some $\sigma_i$ in $V$ must satisfy \eqref{da08d1d9}. 

We now compute $\emptyset'$ using $g$ and the effective enumeration of a set $W$. 

To decide if $n\in\emptyset'$, we simultaneously compute $g(y,r;2n)$ for all $y,r$:
\[
\textrm{if $g(y,r;2n)\de$ with oracle-use $u\geq |\sigma|$ enumerate $(y\oplus r)\upharpoonright_u$ in $W$.}
\]
Then $L\subseteq\llbracket W\rrbracket$ and $W$ is a c.e.\ \pf  set with  $\forall \tau\in W,\ |\tau|\geq|\sigma|$.

Effectively in $n$ we  produce a computable enumeration $(W_s)$ of $W$ and 
\begin{itemize}
\item compute the least $t$ with $\mu(\llbracket W_t\rrbracket\cap\llbracket\sigma\rrbracket)> \mu(\llbracket\sigma\rrbracket)/2$
\item compute the length $k$ of the longest string in $W_t$
\end{itemize}
which exist by \eqref{da08d1d9} and  $L\subseteq \llbracket W\rrbracket=\bigcup_t \llbracket W_t\rrbracket$. 
It remains to show that 
\begin{equation}\label{TnoTtYPJlU}
n\in\emptyset'\iff n\in\emptyset'_k.
\end{equation}
For a contradiction assume that $n\in\emptyset'_s-\emptyset'_{s-1}$ for some $s>k$.

By the definition of $f$ we have $\forall x,\ f(x;\langle n,s\rangle)=x(2n)$  so
\[
y\oplus r\in L\implies
y(\langle n,s\rangle)=f(g(y\oplus r);\langle n,s\rangle)=g(y\oplus r;2n).
\] 
Fix $\tau\in W_t$. Since $\langle n,s\rangle\geq s>k\geq|\tau|$ we have 
\begin{itemize}
\item $g(y,r;2n)$ outputs the same result $i$ for $y\oplus r\in\llbracket\tau\rrbracket$
\item only half of the $y\oplus r\in\llbracket\tau\rrbracket$ satisfy $y(\langle n,s\rangle)=i$
\end{itemize}
so half of $y\oplus r\in\llbracket\tau\rrbracket$ must be outside $L$. Formally:
\[
\mu(\llbracket\tau\rrbracket-L)\geq \mu(\llbracket\tau\rrbracket)/2
\] 
for each $\tau\in W_t$. So $\mu(\llbracket\sigma\rrbracket-L)$ is at least
\[
\mu(\llbracket W_t\rrbracket\cap\llbracket\sigma\rrbracket-L)\geq 
\mu(\llbracket W_t\rrbracket\cap\llbracket\sigma\rrbracket)/2>\mu(\llbracket\sigma\rrbracket)/4
\] 
which contradicts \eqref{da08d1d9}, completing the
proof of \eqref{TnoTtYPJlU}.  So $g$ computes $\emptyset'$. 

Since no partial computable $g$ computes $\emptyset'$, $f$ is one-way.
\end{proof}

one-way functions are not probabilistically invertible on any sufficiently random $z$.
We quantify the level of randomness required for this fact.

\begin{prop}\label{8rIfjZG4Ef}
There is a random-preserving computable  $f:\twome\to\twome$ such that  
for each $z$, $g:\subseteqm\twome\times\twome\to\twome$  satisfying one of:
\begin{enumerate}[(i)]
\item  $z$ is  weakly 2-random  and $g$ is partial computable 
\item $z$ is  weakly 1-random  and $g$ is total computable  
\end{enumerate}
the probability that $g$ inverts $f$ on $z$ is 0.
\end{prop}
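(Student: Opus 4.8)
The plan is to take $f$ to be the function already constructed in Theorem~\ref{649e64aa} (total computable, surjective, random-preserving and one-way) and to derive the pointwise statement from the averaged bound $\mu(L_f(g))=0$ by controlling the \emph{effective complexity of the fibers}. Write $L:=L_f(g)\subseteq\twome\times\twome$, and for a fixed target $z$ let $L_z:=\sqbrad{r}{g(z,r)\de\wedge f(g(z,r))=z}$ be its fiber, so that the probability that $g$ inverts $f$ on $z$ is exactly $\mu(L_z)$. Since $f$ is one-way, $\mu(L)=0$, so Fubini gives $\mu(L_z)=0$ for almost every $z$; in particular, for every $k$ the set $N_k:=\sqbrad{z}{\mu(L_z)\geq 2^{-k}}$ is null. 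The whole argument reduces to pinning down the arithmetical complexity of $N_k$ and matching it to the weak randomness hypothesis on $z$.

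I would treat case (ii) first. Here $g$ is total, so $h:=f\circ g$ is a total computable map $\twome\times\twome\to\twome$ and $L=\sqbrad{z\oplus r}{h(z,r)=z}$ is a \pzn class, being the intersection over $m$ of the clopen sets on which bit $m$ of $h(z,r)$ equals $z(m)$. Writing $L=\bigcap_t C_t$ with $(C_t)$ a uniformly computable decreasing sequence of clopen sets, the fiber measure $z\mapsto\mu(L_z)=\inf_t\mu((C_t)_z)$ is upper semicomputable, since each $\mu((C_t)_z)$ depends only on a finite prefix of $z$ and is computable from it. Hence $N_k=\bigcap_t\sqbrad{z}{\mu((C_t)_z)\geq 2^{-k}}$ is an intersection of uniformly clopen sets, i.e.\ a \pzn class, uniformly in $k$. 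As each $N_k$ is null, a weakly $1$-random $z$ lies outside every $N_k$, so $\mu(L_z)<2^{-k}$ for all $k$ and $\mu(L_z)=0$.

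Case (i) runs the same scheme one level up. Now $g$ is only partial, so $g(z,r)\de$ is a \pztn condition and $L$ is a \pztn class; write $L=\bigcap_m W_m$ with $(W_m)$ uniformly \szn and decreasing. Each fiber measure $z\mapsto\mu((W_m)_z)$ is lower semicomputable uniformly in $m$, so $\sqbrad{z}{\mu((W_m)_z)\geq 2^{-k}}$ is \pztn uniformly in $m,k$; since the $\mu((W_m)_z)$ decrease to $\mu(L_z)$, we obtain $N_k=\bigcap_m\sqbrad{z}{\mu((W_m)_z)\geq 2^{-k}}$, again a \pztn null set uniformly in $k$. A weakly $2$-random $z$ avoids all \pztn null sets, so as before $\mu(L_z)=0$.

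The routine parts are the Fubini step and the verification that the measure of a $\Sigma^0_1(z)$ (resp.\ clopen) fiber is lower semicomputable (resp.\ computable) uniformly in $z$. The point that needs care is the descriptive-complexity bookkeeping: for total $g$ the event $L$ genuinely drops to \pzn, so $\mu(L_z)$ is upper semicomputable and $N_k$ is \pzn, matching weak $1$-randomness, whereas for partial $g$ the definedness clause pushes $L$ up to \pztn and $N_k$ up to \pztn, matching weak $2$-randomness. This exact alignment between the arithmetical level of the inversion event and the degree of weak randomness required is the crux; everything else is a reduction to $\mu(L_f(g))=0$ from Theorem~\ref{649e64aa}.
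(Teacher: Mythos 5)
Your proposal is correct and follows essentially the same route as the paper's proof: the same $f$ from Theorem \ref{649e64aa}, the same Fubini reduction from one-wayness ($\mu(L_f(g))=0$) to nullity of the fiber-measure threshold sets, and the same complexity bookkeeping matching $\Pi^0_1$ null sets to weak 1-randomness for total $g$ and $\Pi^0_2$ null sets to weak 2-randomness for partial $g$. The only difference is cosmetic (and in fact beneficial): you threshold with non-strict inequalities $\mu(L_z)\geq 2^{-k}$, which makes the stated arithmetical complexities exactly right, whereas the paper uses strict inequalities in its sets $B_q$.
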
\begin{proof}
Let $f,  L$ be as in Theorem \ref{649e64aa} and define
\[
B_q\ :=\ \sqbrad{y}{\mu(\sqbrad{r}{y\oplus r\in L})\geq q}.
\]
Then $\mu(L)=\mu(B_q)=0$ for each $q\geq 0$. Let $z$ be such that: 
\[
\mu(\sqbrad{r}{z\oplus r\in L})\geq q>0
\]
for a rational $q>0$ so $z\in B_q$.
If $g$ is partial computable then 
\[
y\oplus r\in L\iff\forall n\ \exists s\ f(g(y,r);n)[s]\downarrow=y(n)
\] 
so $L$ is $\Pi^0_2$. 
Let $(L_{n,s})$ be a computable family of clopen sets with 
\[
L=\bigcap_n\bigcup_s L_{n,s}
\hspace{0.3cm}\textrm{and}\hspace{0.3cm}
\bigcup_s L_{n+1,s}\subseteq\bigcup_s L_{n,s}
\]
and $L_{n,s}\subseteq L_{n,s+1}$. 
Then $B_q$ is a null $\Pi^0_2$ class as it is definable by
\[
y\in B_q\iff\forall n\ \exists s\ \mu(\{r:\{y\oplus r\in L_{n,s}\})\geq q.
\] 
By \eqref{VbNXoA2gzg} it follows that $z$ is not weakly 2-random.
If $g$ is total computable, then $L$ is a null $\Pi^0_1$ class.
By \eqref{VbNXoA2gzg} 
it follows that $z$ is not weakly 1-random.
\end{proof}

\subsection{Properties of one-way functions}\label{IDdXsdgbYm}
The proof that the total computable $f$ of \S\ref{W6lu6GL2ui} is one-way relied on  the fact that $f$ is not injective. 
This is not a coincidence: since  
\begin{itemize}
\item for each tree $T$ with  $[T]=\{x\}$ we have $x\leqT T$ uniformly in $T$
\item if $f$ is a total computable injection then $f^{-1}(y)$ consists of the unique path through a  tree
which is uniformly computable in $y$
\end{itemize}
total computable injections have  computable inverses. 
More generally:

\begin{thm}\label{BZwElndruD}
If $f:\twome\to\twome$ is  total computable, then there exists  
partial computable $g:\subseteqm\twome\to\twome$  which inverts $f$ 
 in $E:=\sqbrad{y}{\abs{f\inv(y)}=1}$.
\end{thm}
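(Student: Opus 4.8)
The plan is to realize each inverse image as the path set of a uniformly $y$-computable tree and then to compute its unique member whenever that member exists. Fix a computable representation $\hat f$ of $f$; since $f$ is total, for every $x$ we have $\lim_{\sigma\prec x}\abs{\hat f(\sigma)}=\infty$ and $\lim_{\sigma\prec x}\hat f(\sigma)=f(x)$. First I would set
\[
T_y:=\sqbrad{\sigma\in\twomel}{\hat f(\sigma)\preceq y}.
\]
By monotonicity of $\hat f$ this is a tree, and membership in it is decidable from $y$ uniformly, since testing $\hat f(\sigma)\preceq y$ inspects only a finite prefix of $y$. A real $x$ is a path through $T_y$ precisely when $\hat f(\sigma)\preceq y$ for all $\sigma\prec x$, i.e.\ when $f(x)=\lim_{\sigma\prec x}\hat f(\sigma)\preceq y$; as both sides are infinite this means $f(x)=y$. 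Hence the paths through $T_y$ are exactly $f\inv(y)$.

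Next I would compute the unique path when $y\in E$. Here $x\restr_n\in T_y$ for every $n$, while every length-$n$ node $\sigma\neq x\restr_n$ lies on no infinite path of $T_y$. The functional $g$, on oracle $y$, produces $x\restr_n$ as follows: search for the least stage $s\geq n$ at which exactly one node of $T_y$ of length $n$ still has an extension of length $s$ in $T_y$ (say such a node \emph{survives to level $s$}; this predicate is $y$-decidable by a finite search), and output that node. Since $x\restr_n$ survives to every level, it lies among the survivors at every stage, so whenever a unique survivor occurs it must be $x\restr_n$. The outputs for successive $n$ are nested prefixes, so $g(y)=x$.

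The main obstacle is termination of this search, and this is exactly where uniqueness of the preimage is used. By K\"onig's lemma each of the finitely many length-$n$ nodes other than $x\restr_n$ roots a subtree of $T_y$ that, having no infinite path, is finite; hence each stops having extensions beyond some level, and any $s$ past all of them leaves $x\restr_n$ as the unique survivor. Thus the search halts and returns $x\restr_n$. Assembling these outputs, $g$ is a single partial computable functional with $g(y)\de=x$ and $f(g(y))=f(x)=y$ for every $y\in E\subseteq f(\twome)$, so $g$ inverts $f$ on $E$. Off $E$ the search may diverge, but the statement only requires inversion on $E$. In effect $g$ is the effective, uniform, relativized form of the classical fact that a $\Pi^0_1$ class with a unique member has that member computable.
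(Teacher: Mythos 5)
Your proof is correct and takes essentially the same approach as the paper's: both extract from a computable representation $\hat f$ a $y$-decidable tree of candidate preimages and use uniqueness of its infinite path, together with compactness, to compute that path. The only difference is packaging --- the paper builds a computable representation $\hat g(\tau)$ from prefix-indexed candidate trees $B_\tau$, using a uniform compactness bound to keep every search finite, whereas you run an unbounded survivor search on the oracle tree $T_y$ and get termination from K\"onig's lemma; this is the same compactness principle deployed in an equivalent way.
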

\begin{proof}
Let $\hat f$ be a computable representation of $f$. For  $y\in E$ the tree
\[
T_y=\{\sigma:\hat{f}(\sigma)\prec y\}
\] 
is uniformly computable in $y$: the map  $y\mapsto T_y$ is computable. Also
\begin{itemize}
\item $[T_y]=f^{-1}(y)$ because $f$ is total 
\item if $y\in E$ then  $f^{-1}(y)$ is a singleton so $T_y$ has a unique path.
\end{itemize}
The  path of  any  tree $T$ with $\abs{[T]}=1$  is computable uniformly in $T$.

So we can compute the unique $x$ in $f^{-1}(y)$ from $y\in E$ uniformly in $y$. 
\end{proof}
\begin{coro}\label{a50fb867a}
 Every total computable injection $f:\twome\to\twome$ has a total computable inversion 
 $g:\twome\to\twome$. 
\end{coro}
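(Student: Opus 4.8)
The plan is to reduce to Theorem~\ref{BZwElndruD} and then upgrade the partial inversion it provides to a total one. Since $f$ is injective, $\abs{f\inv(y)}\leq 1$ for every $y$, so the set $E=\sqbrad{y}{\abs{f\inv(y)}=1}$ equals the range $f(\twome)$. Thus Theorem~\ref{BZwElndruD} already supplies a partial computable $g$ inverting $f$ on all of $f(\twome)$, i.e.\ a partial computable inversion. The only missing ingredient is \emph{totality}: the representation produced there may stall on some $y\notin f(\twome)$, leaving $g(y)\un$. Naively padding a stalled output with a fixed tail is not monotone-safe, since later bits of $y$ could force the true preimage to disagree with the padding. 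So I would instead build the inversion directly, making each branching decision in a way that cannot conflict with later information.

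The key structural fact I would exploit is that injectivity makes $f(\dbra{\sigma 0})$ and $f(\dbra{\sigma 1})$ disjoint compact sets for every $\sigma$. Each $f(\dbra{\sigma i})$ is effectively closed, because $\dbra{\rho}\cap f(\dbra{\sigma i})=\emptyset \iff \dbra{\sigma i}\cap f\inv(\dbra{\rho})=\emptyset$ and emptiness of a \pz class is semi-decidable by compactness. By compactness and disjointness there is a level $\ell$ at which the sets $P_i^\ell:=\sqbrad{\rho\in 2^\ell}{\dbra{\rho}\cap f(\dbra{\sigma i})\neq\emptyset}$ satisfy $P_0^\ell\cap P_1^\ell=\emptyset$. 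Since this condition is semi-decidable and holds for all large $\ell$, I can compute such an $\ell_\sigma$ together with, for each $\rho\in 2^{\ell_\sigma}$, a certificate of which of the intersections $\dbra{\rho}\cap f(\dbra{\sigma i})$ is empty.

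With this in hand I would define a computable monotone representation $\hat g$ that grows the preimage one bit per round. Maintaining a committed prefix $\sigma$ (initially $\lambda$), at each round I compute $\ell_\sigma$ and its certificates, read $y\restr_{\ell_\sigma}$, and branch: if exactly one of $\dbra{y\restr_{\ell_\sigma}}\cap f(\dbra{\sigma i})$ is certified empty I append the \emph{other} bit to $\sigma$; if both are certified empty I append $0$ by default. Because a bit is emitted every round, $g$ is total and computable. For correctness on $f(\twome)$ I would argue by induction that for $y=f(x)$ the committed $\sigma$ always equals $x\restr_{\abs{\sigma}}$: the child $f(\dbra{\sigma x(\abs{\sigma})})$ contains $y$ and so is not certified empty, while the sibling cannot meet $\dbra{y\restr_{\ell_\sigma}}$ by the level-$\ell_\sigma$ disjointness and so is certified empty, forcing the correct bit. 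The default branch is reached only when $\dbra{y\restr_{\ell_\sigma}}$ misses $f(\dbra{\sigma})$, which forces $y\notin f(\twome)$, so it never harms correctness. Hence $f(g(y))=y$ for all $y\in f(\twome)$, and $g$ is the desired total computable inversion.

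The main obstacle, and the point where all the work concentrates, is precisely this totality requirement: guaranteeing that every decision made from a finite prefix of $y$ is permanent and never contradicted later, even for $y$ outside the range. The disjointness level $\ell_\sigma$ is exactly what makes each decision irrevocable, which is why I would invest the construction in computing it rather than trying to totalize the function from Theorem~\ref{BZwElndruD} after the fact.
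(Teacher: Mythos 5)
Your proof is correct, but it does not follow the paper's route: the paper gives no separate argument for Corollary \ref{a50fb867a}, treating it as immediate from Theorem \ref{BZwElndruD} (injectivity gives $E=f(\twome)$, so the theorem's partial computable $g$ inverts $f$ on the whole range), and it leaves the promotion to a \emph{total} $g$ unaddressed. You correctly isolate exactly that gap --- the theorem's $\hat g$ stalls on every $y$ outside the (compact, hence closed) range --- and you close it with a genuinely different, self-contained construction: for each committed prefix $\sigma$ you compute a separation level $\ell_\sigma$ for the disjoint effectively closed sets $f(\dbra{\sigma 0})$ and $f(\dbra{\sigma 1})$, together with emptiness certificates, and emit one preimage bit per round. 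The compactness argument for termination, the soundness of the certificates, and the induction that the committed prefix tracks the true preimage are all correct; in effect you have implemented the paper's earlier remark that for injective $f$ each fiber $f\inv(y)$ is the unique path of a $y$-computable pruned tree. One quibble: your claim that padding a stalled output ``is not monotone-safe'' is overstated. Since $f$ is total, the tree of the range $T:=\sqbrad{\tau}{f\inv(\dbra{\tau})\neq\emptyset}$ is computable: membership is \szn\ (witnessed by some $\hat f(\sigma)\succeq\tau$), and non-membership is also \szn\ because emptiness of a \pz class is certified by compactness. So one may run the theorem's $\hat g$ while the input prefix stays in $T$ and pad with zeros from the first prefix certified to leave $T$; no later bits of $y$ can contradict the padding, since no extension of such a prefix lies in $f(\twome)$. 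That observation yields the short totalization the paper presumably intends, whereas your construction buys a single uniform algorithm with no inside/outside-the-range case split.
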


Partial computable injections need not have computable inverses.
\begin{thm}
There exists a partial computable injection $f:\subseteqm\twome\to\twome$ such that any inversion of $f$ computes $\zj$.
\end{thm}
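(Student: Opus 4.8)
The plan is to reuse the total computable function from Theorem \ref{9w41OY8XJy} and to cut down its domain by partiality, so that it becomes injective while still coding $\zj$. Write $f_0(x;\tuple{n,s}):=x(n)$ if $n\in\zj_{s+1}-\zj_s$ and $f_0(x;\tuple{n,s}):=0$ otherwise, and let
\[
S:=\sqbrad{x}{\forall n\,\parb{x(n)=1\impl n\in\zj}}.
\]
I would take $f:=f_0\restr S$. On $S$ this map is injective: if $x\in S$ and $x(n)=1$ then $n\in\zj$, say $n\in\zj_{s+1}-\zj_s$, so the bit $x(n)$ is copied to output position $\tuple{n,s}$ of $f_0(x)$; hence for $x\in S$ every $1$-bit of $x$ is witnessed in $f(x)$, while $x(n)=0$ is forced whenever $n\notin\zj$, so $x$ is recovered from $f(x)$ together with $\zj$. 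Partiality is essential, since by Corollary \ref{a50fb867a} a total computable injection has a computable inversion.

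The first task — and the step I expect to be the main obstacle — is to realise $S$ as the domain of a partial computable function, i.e.\ to exhibit a computable monotone representation $\hat f$ with $f(x)\de\iff x\in S$. The idea is to \emph{gate} the output. On a string $\sigma$ with $|\sigma|=\ell$ I run the enumeration to stage $\ell$ and let the confirmed length
\[
P(\sigma):=\min\parb{\sqbrad{n<\ell}{\sigma(n)=1\fand n\notin\zj_\ell}\cup\{\ell\}}
\]
be the first position carrying an as-yet-unjustified $1$. I then let $\hat f(\sigma)$ output the longest prefix $f_0(x)\restr_M$ all of whose positions $m<M$ depend only on input coordinates below $P(\sigma)$ (writing $m=\tuple{n(m),\cdot}$, this means $n(m)<P(\sigma)$), capping $M\le\ell$ so that the relevant entry stages are already decided. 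Since a $1$ once justified stays justified, $P$ and hence $M$ are nondecreasing, so $\hat f$ is monotone and computable. The key verification is that $M\to\infty$ exactly on $S$: if $x\in S$ then every $1$-bit is eventually enumerated, so $P(x\restr_\ell)\to\infty$ and $M\to\infty$; if $x\notin S$ then the least bad position $n_0$ is never justified, so $P$ stays $\le n_0$ and $M$ is bounded. The delicate point is that the gate must key on the \emph{input coordinate} $n(m)$ being confirmed, not on the output value $f_0(x;m)$ being determined: the positions $\tuple{n_0,s}$ all carry the constant value $0$ once $n_0\notin\zj$, so a value-based gate would fail to stall, would enlarge the domain beyond $S$, and would destroy injectivity.

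It remains to show that every inversion $g$ of $f$ computes $\zj$, adapting Theorem \ref{9w41OY8XJy}. Note $f(\zerome)=\zerome$ with $\zerome\in S$, so $\zerome\in f(\twome)$ and $g(\zerome)\de$; let $u_n$ be the oracle-use of the computation of $g(\zerome;n)$. I claim $n\in\zj\iff n\in\zj_{u_n}$, which decides $\zj$ from $g$. For the nontrivial direction suppose $n$ enters at a stage $s+1$ with $s\ge u_n$, and let $e_n\in\twome$ be the real with a single $1$ at position $n$. Then $e_n\in S$ (as $n\in\zj$) and
\[
z:=f(e_n)=0^{\tuple{n,s}}\,1\,\zerome\in f(\twome),
\]
so $g$ inverts $f$ on $z$. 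Because the single $1$ of $z$ sits at position $\tuple{n,s}\ge s\ge u_n$, the reals $z$ and $\zerome$ agree on the first $u_n$ bits, whence $g(z;n)=g(\zerome;n)$. Now $f(g(z))=z$ gives $\parb{g(z)}(n)=f_0(g(z);\tuple{n,s})=z(\tuple{n,s})=1$, while $f(g(\zerome))=\zerome$ gives $\parb{g(\zerome)}(n)=f_0(g(\zerome);\tuple{n,s})=\zerome(\tuple{n,s})=0$; combined with $g(z;n)=g(\zerome;n)$ this is the contradiction $1=0$. Hence no entry occurs after stage $u_n$, proving the claim. As the decision procedure reads only finitely many bits of any representation of $g$, every representation computes $\zj$, i.e.\ $\zj\leqT g$.
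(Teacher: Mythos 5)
Your proposal is correct and takes essentially the same route as the paper: both restrict the function $f_0$ of Theorem \ref{9w41OY8XJy} to the reals whose $1$-positions all lie in $\zj$ (a semi-decidable domain condition, since $\zj$ is c.e.), get injectivity for exactly the reason you state, and decide $\zj$ from any inversion by the same oracle-use contradiction at inputs $\zerome$ and $0^{\tuple{n,s}}1\zerome$. The only difference is implementation of the domain restriction: the paper sets $f(x)\simeq p(x)\oplus q(x)$ where $q(x;n)\simeq 0$ iff every $1$-bit of $x$ below $n$ is in $\zj$ (so $q$ diverges off the intended domain), which is just a slicker packaging of the gated monotone representation $\hat f$ that you build by hand.
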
\begin{proof}
Let $f:\subseteqm\twome\to\twome$  be defined by $f(x):=p(x)\oplus q(x)$ where 
\begin{align*}
p(x; \tuple{n,s}):=&\begin{cases}
x(n) &\textrm{if $n\in\zj_s-\zj_{s-1}$}\\[0.1cm]
\ \ 0&\textrm{otherwise}
\end{cases}\\[0.1cm]
q(x; n):=&\begin{cases}
0 &\textrm{if $\forall i\leq n,\ \parb{x(i)=0\ \vee\  i\in\zj}$}\\[0.1cm]
\un &\textrm{otherwise.}
\end{cases}
\end{align*}
Then $f$ is partial computable and
\begin{itemize}
\item $f(x)\de$ iff $x$ (as a set of natural numbers)  is a subset of $\zj$
\item if $f(x)\de$, $f(z)\de$, $x\neq z$ then $x,z$ can only differ on positions in $\zj$.
\end{itemize}
In the latter case $p(x)\neq p(z)$ and $f(x)\neq f(z)$, so $f$ is injective.

As in the proof of Proposition \ref{9w41OY8XJy}, every inversion $g$ of $f$ computes $\zj$.
%
%
\end{proof}
 We now consider the extent to which total computable one-way functions fail to be injective.
Our results hold for a weaker type of one-way functions. 
\begin{defi}[Weakly one-way]
Given $f, h:\subseteqm\twome\to\twome$ let 
\[
L_{f}(h):=\sqbrad{y}{h(y)\de\wedga f(h(y))\de=y}.
\]
A partial computable $f:\subseteqm\twome\to\twome$ is {\em weakly one-way}
if $\mu(f(\twome))>0$ and $\mu(L_{f}(h))=0$ for each  partial computable $h:\subseteqm\twome\to\twome$.
\end{defi}
Clearly one-way functions are weakly one-way.

%
\begin{coro}\label{SXcsEXsjOj}
Every (weakly) one-way total computable $f:\twome\to\twome$  is almost nowhere injective.
\end{coro}\begin{proof}
Given $f$ as in the statement suppose that 
\[
\mu(E_f)>0
\hspace{0.3cm}\textrm{where}\hspace{0.3cm}
E_f:=\sqbrad{y}{\abs{f\inv(y)}=1}.
\]
By Theorem \ref{BZwElndruD} there is a partial computable function $h$ with domain $E_f$
and $\forall y\in E_f,\ f(h(y))=y$.  So $f$ is not weakly one-way.
\end{proof}
For the following, note that Theorem \ref{BZwElndruD} relativizes to any cylinder $\dbra{\sigma}$.

\begin{thm}\label{9j2ZZuHXST}
If $f:\twome\to\twome$ is a total computable (weakly) one-way function  then 
$f\inv(y)$ is uncountable for almost every $y\in f(\twome)$.
\end{thm}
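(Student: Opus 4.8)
The plan is to reduce ``$f\inv(y)$ is countable'' to ``$f$ is locally injective on some neighborhood over $y$'' and then annihilate each such local-injectivity set by the relativized form of Theorem \ref{BZwElndruD}. For each string $\sigma$ put
\[
E_\sigma:=\sqbrad{y}{\abs{f\inv(y)\cap\dbra{\sigma}}=1}.
\]
First I would note that for $y\in f(\twome)$ the fiber $f\inv(y)=\bigcap_n f^{-1}(\dbra{y\restr_n})$ is a nonempty closed set. If it is also countable, then its Cantor--Bendixson kernel is empty, so $f\inv(y)$ is scattered and thus has a point $x$ isolated in it. Since the basic neighborhoods form a basis of the topology, there is a string $\sigma$ with $\dbra{\sigma}\cap f\inv(y)=\{x\}$, that is $y\in E_\sigma$. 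Hence
\[
\sqbrad{y\in f(\twome)}{f\inv(y)\ \textrm{countable}}\ \subseteq\ \bigcup_\sigma E_\sigma .
\]

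Next I would show $\mu(E_\sigma)=0$ for every $\sigma$. Relativizing Theorem \ref{BZwElndruD} to the clopen set $\dbra{\sigma}$ (which is computably homeomorphic to $\twome$, so that $f\restr\dbra{\sigma}$ is a total computable function on a copy of $\twome$ whose singleton fibers are exactly the $y\in E_\sigma$) yields a partial computable $g_\sigma:\subseteqm\twome\to\twome$ with $g_\sigma(y)\in\dbra{\sigma}$ and $f(g_\sigma(y))\de=y$ for every $y\in E_\sigma$; in other words $E_\sigma\subseteq L_f(g_\sigma)$. Because $f$ is weakly one-way and $g_\sigma$ is partial computable on one variable, $\mu(L_f(g_\sigma))=0$, and therefore $\mu(E_\sigma)=0$.

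Finally, as there are only countably many strings $\sigma$, the union $\bigcup_\sigma E_\sigma$ is null, so the set of $y\in f(\twome)$ with countable $f\inv(y)$ is null. Since $\mu(f(\twome))>0$ for (weakly) one-way $f$, this gives that $f\inv(y)$ is uncountable for almost every $y\in f(\twome)$, as required.

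The step I expect to be the main obstacle is making the relativization of Theorem \ref{BZwElndruD} fully precise: one must check that restricting the total computable $f$ to the neighborhood $\dbra{\sigma}$ produces a total computable map on a computable copy of $\twome$ whose one-element fibers coincide with $E_\sigma$, and that the inverse $g_\sigma$ it supplies is a genuine partial computable function of one variable, so that the definition of weakly one-way applies verbatim. By contrast, the topological observation (an isolated point of a closed set is cut out by a basic neighborhood) and the countable union of null sets being null are routine.
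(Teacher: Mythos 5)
Your proposal is correct and follows essentially the same route as the paper: the same decomposition into the sets $E_\sigma$ of singleton local fibers, the same covering of the bad $y$ via an isolated point of the closed fiber $f\inv(y)$ cut out by a basic neighborhood, and the same appeal to the relativization of Theorem \ref{BZwElndruD} to $\dbra{\sigma}$ to contradict weak one-wayness. The only cosmetic difference is that the paper argues contrapositively with the set of $y$ whose fiber is not perfect (thereby stating the slightly stronger ``perfect almost everywhere'' conclusion), whereas you work directly with countable fibers via the Cantor--Bendixson observation, which amounts to the same argument.
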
\begin{proof}
Suppose that  $f:\twome\to\twome$ is total computable.
Since  countable sets of reals are not perfect it suffices to show that if 
\[
\mu(D)>0
\hspace{0.3cm}\textrm{where}\hspace{0.2cm}
D:=\sqbrad{y}{\textrm{$f\inv(y)$ is not perfect}}
\]
then $f$ is not weakly one-way. 
For each $\sigma$ let 
\begin{itemize}
\item $f_{\sigma}$ be the restriction of $f$ to $\dbra{\sigma}$ 
and set $E_{\sigma}:=\sqbrad{y}{\abs{f_{\sigma}\inv(y)}=1}$
\item $g_{\sigma}$ be partial computable which inverts $f_{\sigma}$ on  $E_{\sigma}$
\end{itemize}
where the existence of the $g_{\sigma}$ follows from Theorem \ref{BZwElndruD}.

If $y\in D$ the closed set $f\inv(y)$ has an isolated path, so
\[
\exists \rho,\ y\in E_{\rho}
\hspace{0.3cm}\textrm{and}\hspace{0.3cm}
D\subseteq\bigcup_{\sigma} E_{\sigma}.
\]
Since  $\mu(D)>0$ there is $\sigma$ with $\mu(E_{\sigma})>0$ and
by the choice of $g_{\sigma}$: 
\[
\mu(\sqbrad{y}{g_{\sigma}(y)\de\wedga f(g_{\sigma}(y))=y})>0.
\]
This shows that $f$ is not weakly one-way.
\end{proof}

\section{Inversions of nearly injective functions}\label{3p5CxCWq7F}
Since one-way injections are well-studied  in computational complexity 
\cite{onewayPermImp, onewaypermRothe} it is interesting to ask if there are one-way injections $f$. 
Their existence remains unknown but as discussed in \S\ref{IDdXsdgbYm}, they cannot be total.

With this motivation, we examine the extent to which non-injectivity is essential in the arguments  of
\S\ref{4GY8Jx1fU}. 
We exhibit total computable random-preserving surjections
that are hard to invert and are {\em nearly injective}. 
\begin{defi}
We say that $f:\twome\to\twome$ is {\em two-to-one} if $\forall y,\ \abs{f\inv(y)}\leq 2$.
\end{defi}
By extending the method of  \S\ref{4GY8Jx1fU}, 
in \S\ref{2cPoDTVyqX} we 
exhibit a  total computable two-to-one  random-preserving surjection
whose inversions compute $\zj$ but is nevertheless almost everywhere effectively invertible. 
In \S\ref{9TULCemTNb} we  exhibit 
a two-to-one total computable random-preserving  surjection
with no effective probabilistic map that inverts it  almost everywhere.

\subsection{Blueprint for two-to-one functions}\label{2cPoDTVyqX}
Our maps will be of the form $f(x\oplus z):=h^z(x)\oplus z=y\oplus z$ where
\begin{itemize}
\item $h^z$ selects positions of $x$-bits {\em used} in (\ie copied into) $h^z(x)$ 
\item all but at most one position are {\em used} in $h^z(x)$.
\end{itemize}

\begin{figure}
\mybox{\parbox{11cm}{{\em Input:} $x\oplus z$; \hspace{0.1cm} {\em Output:} $y$ with $f(x\oplus z)=y\oplus z$.} 
\begin{algorithmic}[1]
\STATE \vspace{-0.3cm} Initialization: $k:=0$, $s:=0$
\WHILE{true}
\IF{$k\in\zj_s$ or $E_s^z(k)$}
\STATE $y(s):=x(k)$
\STATE $k:=s+1$
\ELSE
\STATE $y(s):=x(s+1)$
\ENDIF
\STATE $s:=s+1$
\ENDWHILE
\end{algorithmic}}
\caption{Definition of $f$ given a $z$-computable predicate $E^z_{s}(i)$.}\label{KGNpueO7aa}\vspace{-0.3cm}
\end{figure}
The selection of $x$-bits is facilitated by a movable marker $k$ and depends on a computable predicate $E^z_{s}(i)$ as shown in
Figure \ref{KGNpueO7aa}. Let $k^z_s$ be the candidate for the unique {\em unused} position at $s$,  which corresponds to $k$ in Figure \ref{KGNpueO7aa}. 

The update of $k^z_s$   occurs if
\begin{itemize}
\item either $k^z_s\in \zj_s$ which we call a {\em $\zj$-permission}
\item or $E^z_s(k^z_s)$ which we call a {\em $z$-permission}
\end{itemize}
in which case candidate $k^z_s$ is eliminated. Since $k^z_s$ is non-decreasing:
\begin{enumerate}[(i)]
\item  if $\lim_s k^z_s=\infty$   all $x$-positions are {\em used} in $h^z(x)$ 
\item if $\lim_s k^z_s=k_0$  all $x$-positions except $k_0$   are {\em used} in $h^z(x)$.
\end{enumerate}
So $\abs{f\inv(h^z(x)\oplus z)}=1$ if  (i) holds and $\abs{f\inv(h^z(x)\oplus z)}=2$ if (ii) holds.

We now need a relativization of Lemma \ref{VoqXmSWyYE} that applies to this
extended form of selective permutation. To this end, we use a fact from \cite{vLamb90}:
\begin{equation}\label{vyx2njiUf5}
\textrm{$x\oplus y$ is random iff $x$ is random and $y$ is $x$-random}
\end{equation}
also known as {\em van Lambalgen's theorem}.
\begin{lem}\label{VoqXmSWyYErel}
Let $p$ be a total Turing functional such that
$n\mapsto p^z(n)$ is injective for each oracle  $z$ and define
$f, h^z:\twome\to\twome$ by
\[
h^z(x;n):= x(p^z(n))
\hspace{0.3cm}\textrm{and}\hspace{0.3cm}
f(x\oplus z):=h^z(x)\oplus z.
\]
Then  $f$ is a total computable  random-preserving surjection.
\end{lem}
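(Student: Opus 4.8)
The plan is to follow the proof of Lemma~\ref{VoqXmSWyYE} almost verbatim, relativizing each step to the oracle $z$ that is carried along in the odd-indexed half of the input. First I would dispatch the easy clauses. Totality and computability are immediate: on input $x\oplus z$ the even output bits are $g^z(x;n)=x(p^z(n))$, computed by running the total functional $p$ on oracle $z$ to obtain the position $p^z(n)$ and reading that bit of $x$, while the odd output bits simply copy $z$. For surjectivity, given a target $w$ decode it as $w=v\oplus z$ and set
\[
x(m):=\begin{cases} v(n) & \textrm{if $m=p^z(n)$}\\ 0 & \textrm{otherwise,}\end{cases}
\]
which is well defined because $n\mapsto p^z(n)$ is injective; then $g^z(x)=v$ and $f(x\oplus z)=v\oplus z=w$.

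For random-preservation I would pull a universal \ml test back through $f$, exactly as in Lemma~\ref{VoqXmSWyYE}. Let $(V_i)$ be a universal \ml test with \pf and uniformly \ce members $V_i\subseteq\twomel$. Since $f$ is total computable, the classes $f\inv(\dbra{V_i})=\bigcup_{\tau\in V_i} f\inv(\dbra{\tau})$ are uniformly \szn. The crux is to show that $f$ is measure-preserving on neighborhoods, i.e. $\mu(f\inv(\dbra{\tau}))=\mu(\dbra{\tau})$, so that $(f\inv(\dbra{V_i}))$ is again a \ml test; universality then forces every $x\oplus z$ mapping to a non-random real to be non-random, which is exactly random-preservation.

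The main obstacle is this measure computation, since the preimage positions $p^z(0),\dots,p^z(m-1)$ now vary with the oracle $z$ rather than being fixed as in Lemma~\ref{VoqXmSWyYE}; I would handle it by integrating over $z$ (Fubini). Writing a string of even length $2m$ as $\tau=(v\restr_m)\oplus b$ with $|b|=m$,
\[
f\inv(\dbra{\tau})=\sqbrad{x\oplus z}{b\prec z \textrm{ and } \forall n<m,\ x(p^z(n))=v(n)}.
\]
For each fixed $z$ the injectivity of $n\mapsto p^z(n)$ makes $p^z(0),\dots,p^z(m-1)$ distinct, so the $m$ constraints pin down $m$ distinct bits of $x$ and the conditional $x$-measure of this set is exactly $2^{-m}$; integrating this against the measure $\mu(\dbra{b})=2^{-m}$ of the $z$-cylinder yields $\mu(f\inv(\dbra{\tau}))=2^{-2m}=\mu(\dbra{\tau})$. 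Equivalently, for each fixed $z$ the map $x\mapsto g^z(x)$ pushes the uniform measure forward to the uniform measure, and since the $z$-coordinate is copied through unchanged, $f$ is measure-preserving by Fubini. With measure-preservation in hand the test-pullback argument of Lemma~\ref{VoqXmSWyYE} goes through unchanged, which completes the proof.
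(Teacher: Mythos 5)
Your proposal is correct, but it reaches random-preservation by a genuinely different route than the paper. The paper's proof is a two-line reduction: it relativizes Lemma \ref{VoqXmSWyYE} to the oracle $z$, so that each $g^z$ is a $z$-random-preserving surjection, and then applies van Lambalgen's theorem twice --- if $x\oplus z$ is random then $z$ is random and $x$ is $z$-random, hence $g^z(x)$ is $z$-random, hence $g^z(x)\oplus z$ is random. You instead avoid both relativization and van Lambalgen by proving the global statement that $f$ preserves $\mu$: conditioning on $z$, the injectivity of $n\mapsto p^z(n)$ pins down $m$ distinct bits of $x$, and Fubini gives $\mu(f\inv(\dbra{\tau}))=\mu(\dbra{\tau})$; the pullback of an (unrelativized) universal \ml test is then again a \ml test, exactly as in Lemma \ref{VoqXmSWyYE}. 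Your version is more self-contained and yields the slightly stronger conclusion that the pushforward of $\mu$ under $f$ is $\mu$, while the paper's version is shorter because the relativization of Lemma \ref{VoqXmSWyYE} comes for free and van Lambalgen handles the joint-randomness bookkeeping. Two small points to tidy in your write-up: your measure identity is computed only for $\tau$ of even length, so either split odd-length cylinders as $\dbra{\tau}=\dbra{\tau 0}\cup\dbra{\tau 1}$ (each preimage then has measure $2^{-|\tau|-1}$ by the even case) or refine each $V_i$ to even-length strings, which preserves prefix-freeness and measure; and the Fubini step tacitly uses measurability of $\sqbrad{x\oplus z}{b\prec z \ \wedge\ \forall n<m,\ x(p^z(n))=v(n)}$, which holds because $z\mapsto p^z(n)$ is continuous (as $p$ is a total functional) and hence locally constant, making this set clopen. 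Both are one-line fixes and do not affect correctness.
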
\begin{proof}
Since $p$ is a total Turing functional, $(z,x)\mapsto h^z(x)$ and $f$ are computable.
By the  relativization of  Lemma \ref{VoqXmSWyYE} to arbitrary oracle $z$
it follows that $g^z$ is a $z$-random preserving surjection.

So $f$ is a surjection. If $x\oplus z$ is random, by \eqref{vyx2njiUf5} we get that
 \[
 \textrm{ $x$ is $z$-random}\impl
 \textrm{$h^z(x)$ is $z$-random} 
 \]
 so $h^z(x)\oplus z$ is random. Hence $f$ is random-preserving.
\end{proof}

We now apply the above framework to prove:

\begin{thm}\label{2bDADUaw4x}
There is a total computable  $f:\twome\to\twome$ such that:
\begin{itemize}
\item $f$ is a two-to-one random-preserving surjection
\item $f$ is almost everywhere effectively invertible
\item every $g:\subseteqm\twome\to\twome$ that inverts $f$  computes $\zj$
\end{itemize}
and the latter  holds for the restriction of $f$ in any cylinder $\dbra{\sigma}$.
\end{thm}\begin{proof}
We define  $f$ as above with predicate $z(\tuple{i,s})=1$  in place of $E^z_s(i)$. 

For each $z$ let $k^z_0=0$ and 
\begin{equation}\label{xzycPQkSMH}
k^z_{s+1}:=\begin{cases}
s+1 &  \textrm{if $k^z_{s}\in \zj_{s}$ or $z(\tuple{k^z_{s},s})=1$} \\[0.13cm]
\ \ k^z_{s} &  \textrm{otherwise.}
\end{cases}
\end{equation}
To select the next $x$-bit used in $f(x\oplus z)$ define:
\[
p^z_s:=\begin{cases}
s+1 &  \textrm{if $k^z_{s+1}=k^z_{s}$} \\[0.13cm]
\ \ k^z_{s} &  \textrm{otherwise.}
\end{cases}
\]
Then $s\mapsto p^z_s$ is injective for each $z$. By  Lemma \ref{VoqXmSWyYErel} the $f$  given by
\[
f(x\oplus z):=h^z(x)\oplus z
\hspace{0.3cm}\textrm{where}\hspace{0.3cm}
h^z(x; s):=x(p^z_s)
\]
is a total computable  random-preserving surjection.
Also 
\[
\forall k,\ \abs{\sqbrad{i}{z(\tuple{k, i})=1}}=\infty\impl \lim_s k^z_s=\infty\impl \abs{f\inv(h^z(x)\oplus z)}=1
\]    
as discussed above, and $f$ is two-to-one.
This condition is met for all random $z$, so $f$ is almost everywhere injective. By Theorem \ref{BZwElndruD} there is a 
partial computable almost everywhere inversion of $f$.

Assuming that $g:\subseteqm\twome\to\twome$ inverts $f$ we show that $g\geqT \zj$. 

To decide if $n\in\zj$, we define $z$ so that  $k^z_t$  gets  stuck on $n$ unless $n\in \zj$:
\begin{equation}\label{dpDCqn3bGS}
z(\langle i,s\rangle):=\begin{cases}
0 & \text{if } i=n \\
1 & \text{if }  i\neq n.
 \end{cases}
\end{equation}
Then $k_t^{z}$ does not get  stuck on any number $\neq n$ in the sense that
\begin{enumerate}[\hthree(a)]
\item  $k_t^z=n$ at $t=n$
\item  $k_t^z$ is updated at $t>n$ iff $n\in\zj_t$.
\end{enumerate}
We show  $n\in\zj$ iff $n\in\zj_{\max\{u,s\}}$,  where
$u$ is the oracle-use of $g(\zerome\oplus z; 2n)$.

For a contradiction suppose  $n\in\zj_t-\zj_{t-1}$ for  $t>\max\{u,s\}$ so
\begin{equation}\label{vu4ROYq1tU}
\forall x,y\ \ \parb{f(x\oplus z)=y\oplus z\impl y(t)=x(n)}
\end{equation}
due to (b) and the definition of $f$. Let $y_0:=\zerome $ and $y_1:=0^{t}10^\omega$.

Then for $i=0,1$ we have $f(g(y_i \oplus z))=y_i \oplus z$ and by \eqref{vu4ROYq1tU}: 
\begin{equation}\label{h5gCeDoGIw}
g(y_0\oplus z; 2n)=y_0(t)=0
\hspace{0.3cm}\textrm{and}\hspace{0.3cm}
g(y_1\oplus z; 2n)=y_1(t)=1.
\end{equation}
Since $u$ is the oracle-use of $g(y_0 \oplus z_n; 2n)$ and $t>u$ we have
\[
(y_0\oplus z)\restr_{u} \prec y_1\oplus z
\hspace{0.3cm}\textrm{so}\hspace{0.3cm}
g(y_0\oplus z; 2n)=g(y_1\oplus z; 2n)
\]
which contradicts \eqref{h5gCeDoGIw}.
It follows that $n\in\zj\iff n\in\zj_{\max\{u,s\}}$.

Finally we modify the above argument so that  $g\geqT \zj$ is obtained from  the weaker assumption that 
$g$ inverts $f$ inside a cylinder  $\dbra{\upsilon \oplus \zeta}$. Let
\begin{itemize}
\item  $z$ be the extension of $\zeta$ given by  \eqref{dpDCqn3bGS} for $\tuple{i,s}\geq  |\zeta|$
\item $u$ be the oracle-use of $g(\upsilon \zerome\oplus z; 2n)\de$
\item $y_0:=\upsilon\zerome$ and $y_1:=\upsilon 0^{t-|\upsilon|}10^\omega$
\end{itemize}
for $n> |\zeta|$ and $t>u$ which are used for deciding if $n\in\zj$ as before.

Assuming $n> |\zeta|$  the modified $z$ satisfies (a), (b). So the above argument
applies to the modified $y_0, y_1, u$ and proves $g\geqT\zj$ as required.
\end{proof}

\subsection{Almost everywhere probabilistic inversions}\label{9TULCemTNb}
By Theorem \ref{9j2ZZuHXST} every computable two-to-one  random-preserving
surjection can be effectively inverted with positive probability.
This leaves the possibility that a total computable  $f$ exists such that:
\begin{enumerate}[(i)]
\item $f:\twome\to\twome$ is two-to-one, random-preserving and surjective
\item no partial computable $g$ inverts $f$ with probability 1.
\end{enumerate}
We construct $f$ with the above properties within the framework of \S\ref{2cPoDTVyqX}, starting
with the modifications and additional ideas needed to achieve this.

In \S\ref{2cPoDTVyqX} we relied on the fact that the given candidate $g$ for inverting $f$
was defined on certain specially constructed computable reals $y_i\oplus z_n$.
This may no longer be the case since we can only assume that $g$ is defined on a set of measure 1.
We  restrict our considerations to sufficiently random reals.

The domain of a partial computable $g$ which is defined almost everywhere is a $\Pi^0_2$ class of measure 1 and
includes all weakly randoms. In general we only have $g\not\geqT\zj$ so for some $r\not\geqT \zj$ we use {\em weakly $r$-randoms}:
reals that are members of every $\Sigma^0_1(r)$ class of measure 1.

\begin{lem}\label{EVZuKrGCnN}
Suppose that $f:\twome\to\twome$ is a computable surjection and 
$g:\subseteqm \twome\to\twome$ is an almost everywhere inversion of $f$. 
If $g\leqT r$ then $g$ inverts $f$ on each weakly $r$-random real.
\end{lem}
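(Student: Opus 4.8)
The plan is to exhibit, for each $k$, a $\Sigma^0_1(r)$ class of measure $1$ whose members $y$ are forced to satisfy both $g(y)\de$ and $f(g(y))=y$, and then to invoke the defining property of weak $r$-randomness separately on each member of this countable family of classes.

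Fix a computable representation $\hat f$ of $f$ and, using $g\leqT r$, an $r$-computable representation $\hat g$ of $g$. Since the composition $\hat f\circ\hat g$ is $\preceq$-monotone and $r$-computable, for each $k$ the class
\[
A_k:=\sqbrad{y}{\exists n\ \parb{\abs{\hat f(\hat g(y\restr_n))}\geq k\ \wedge\ \hat f(\hat g(y\restr_n))\prec y}}
\]
is open and $\Sigma^0_1(r)$, since membership is witnessed by a single prefix $y\restr_n$ together with the finitely many bits of $y$ needed to check $\hat f(\hat g(y\restr_n))\prec y$. I first observe that $L_f(g)\subseteq A_k$ for every $k$: if $y\in L_f(g)$ then $g(y)\de=x$ and $f(x)=y$, so by the representation axioms $\hat g(y\restr_n)\prec x$ and hence $\hat f(\hat g(y\restr_n))\prec f(x)=y$ for all $n$, while $\abs{\hat f(\hat g(y\restr_n))}\to\infty$ as $n\to\infty$; thus some prefix witnesses $y\in A_k$. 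Because $g$ is an almost everywhere inversion, $\mu(L_f(g))=1$, and therefore $\mu(A_k)=1$ for every $k$.

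Now let $y$ be weakly $r$-random. By definition $y$ lies in every $\Sigma^0_1(r)$ class of measure $1$, so $y\in\bigcap_k A_k$. Hence, for each $k$, $y$ has a prefix $y\restr_{n_k}$ with $\hat f(\hat g(y\restr_{n_k}))\prec y$ and $\abs{\hat f(\hat g(y\restr_{n_k}))}\geq k$. By monotonicity of $\hat f\circ\hat g$ these strings form a nested sequence of prefixes of $y$ of unbounded length, so $\lim_{n}\hat f(\hat g(y\restr_n))=y$. This limit being infinite forces $\abs{\hat g(y\restr_n)}\to\infty$ (otherwise $\hat g(y\restr_n)$ would stabilize at a finite string and $\hat f(\hat g(y\restr_n))$ would stay bounded), which is exactly $g(y)\de$; writing $x:=g(y)$ and using that $f$ is total computable we conclude $f(x)=\lim_n\hat f(\hat g(y\restr_n))=y$. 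Thus $g$ inverts $f$ on $y$.

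The point that needs care is to keep each class at the $\Sigma^0_1(r)$ level: weak $r$-randomness only controls $\Sigma^0_1(r)$ classes of measure $1$, so I must \emph{not} pass to the $\Pi^0_2(r)$ intersection $\bigcap_k A_k$ (which would instead demand weak $2$-randomness relative to $r$) before applying the hypothesis, but rather apply it to each of the countably many $A_k$ individually and then intersect. The remaining work is the routine representation bookkeeping in the previous paragraph, namely that an unbounded monotone limit of $\hat f(\hat g(y\restr_n))$ along the prefixes of $y$ simultaneously guarantees convergence of $g(y)$ and pins the value of $f(g(y))$ to $y$.
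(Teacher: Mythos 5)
Your proposal is correct and is essentially the paper's own argument: your classes $A_k$ are exactly the $\Sigma^0_1(r)$ components exhibiting $L_f(g)$ as a $\Pi^0_2(r)$ class of measure $1$, and the paper's proof consists precisely of noting that $L_f(g)$ is such a class and that weakly $r$-random reals belong to every measure-one $\Pi^0_2(r)$ class (because its complement decomposes into null $\Pi^0_1(r)$ pieces, i.e.\ your intersect-each-$A_k$-separately step). You have simply unfolded the representation bookkeeping that the paper leaves implicit.
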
\begin{proof}
The set of reals where $g$ inverts $f$ is the $\Pi^0_2(r)$ class
\[
L_f(g):=
\sqbrad{y}{g(y)\de\wedga f(g(y))=y}
\]
which has measure 1 according to the hypothesis.
Since weakly $r$-random reals belong to every $\Sigma^0_1(r)$ class of measure 1,
they also belong to every $\Pi^0_2(r)$ class of measure 1. So $L_f(g)$ contains every
weakly $r$-random real.
\end{proof}
At this point we need  some facts about {\em generic reals}. 

Genericity is a topological form of {\em typicality} which
has been extensively studied along with randomness in 
computability \cite{Jockusch80, typical} and computational 
complexity \cite{WANG200033, S009753979630235X, 829497829767, LORENTZ1998245}.
Generic reals avoid  every  {\em definable} meager (as in Baire category) set, as algorithmically random reals avoid  
definable null set. The level of {\em definability} determines the strength of genericity or randomness.  

\begin{defi}[\citet{Jockusch80}]
Given $r, w\in\twome$, if  
\[
\exists \sigma\prec w: \ \parb{\dbra{\sigma}\subseteq G\ \vee\ \dbra{\sigma}\cap G=\emptyset}.
\]
 for every $\Sigma^0_1(r)$ class $G\subseteq\twome$ we say that  $w$ is {\em $r$-generic}.
\end{defi}
By \cite{Jockusch80} and \cite{Kurtz81} (see \cite[Theorem 8.11.7]{rodenisbook}) respectively, for each $r$:
\begin{enumerate}[\hspace{0.3cm}(a)]
\item  if  $r\not\geqT\zj$ and $z$ is $r$-generic then $z\oplus r\not\geqT \zj$
\item every $r$-generic is weakly $r$-random and not  random.
\end{enumerate}
\begin{defi}
The {\em $n$th column of  $w$} is the real  $w^n$  with $w^n(i):=w(\tuple{n,i})$.
\end{defi}
By the analogue of  \eqref{vyx2njiUf5}  for genericity \cite[Proposition 2.2]{Yuliang05} we have
\begin{equation}\label{fHW6C3KBo3}
\textrm{if $w$ is $r$-generic then $w^n$ is $r$-generic for each $n$.}
\end{equation}
By \cite[Corollary 2.1]{BDNGP} one implication of \eqref{vyx2njiUf5} holds for weak randomness:
\begin{equation}\label{fHW6C3KBo3as}
\parbox{9cm}{if $w$ is weakly $r$-random and $y$ is weakly $w\oplus r$-random  then $y\oplus w$ is weakly $r$-random.}
\end{equation}
Recall that  incomputable sets cannot be computed probabilistically with non-zero probability \cite{deleeuw1955}.
A  relativization of this fact is  
\begin{equation}\label{le9rALzZin}
y\not\geqT r\implies \mu\parb{\sqbrad{x}{x\oplus y\geqT r}}=0
\end{equation}
which can be found in \cite[Corollary 8.12.2]{rodenisbook}.

Given $r\geqT g$, we  adapt the argument of \S\ref{2cPoDTVyqX} 
by choosing $y_i, z$ so that the $y_i\oplus z$ are weakly $r$-random.
We construct them from the $y,w$ given by:
\begin{lem}\label{YCjtrEhy3n}
For each $r\not\geqT\zj$ there exist $y, w$ such that 
\begin{enumerate}[(i)]
\item $y\oplus  w$ is weakly $r$-random and $y$ is  random 
\item  no column $w^n$ of $w$ is  random
\item  $r\oplus y\oplus  w\not\geqT\zj$.
\end{enumerate}
\end{lem}\begin{proof}
Let $w$ be $r$-generic and $y$ be $(w\oplus r)'$-random so 
\begin{equation}\label{jMzAEBN4oY}
\textrm{$y$ is weakly 2-random relative to $w\oplus r$}
\end{equation}
and by (b), $w$ is weakly $r$-random. 
This, combined with \eqref{fHW6C3KBo3as}, gives (i).

By \eqref{fHW6C3KBo3} each $w^n$ is $r$-generic and by (b) not random, hence (ii).

By (a) we have $w\oplus r\not\geqT \zj$ so by \eqref{le9rALzZin} the $\Sigma^0_3(w\oplus r)$ class
\[
G:=\sqbrad{x}{x\oplus w\oplus r\geqT\zj}
\]
is null. By \eqref{jMzAEBN4oY} and \eqref{VbNXoA2gzg} we get
$y\not\in G$ and (iii) holds.
\end{proof}

Fix $w,y$ as in  Lemma \ref{YCjtrEhy3n} and let 
$(U_s)$ be an effective enumeration of a member $U$ 
of a universal \ml test with $y\not\in U$ and $\forall n,\ w^n\in U$. 

For each $n$ we $(y \oplus w)$-effectively define a real $z$ consisting of the columns of $w$ except for the $n$th column which is $y$. 
Then $y\oplus w$ is weakly $r$-random by 
\eqref{fHW6C3KBo3as}. Since $z$ copies $y\oplus w$ from a computable array of indices,  $z$ is also weakly $r$-random. 
This suggests defining $E^z_{s}(i)$ in the template of \S\ref{2cPoDTVyqX}  in terms of the memberships $z^i\in U$ of the $i$th column $z^i$ of $z$.

A last hurdle in the adaptation of  \S\ref{2cPoDTVyqX} is the requirement  
\begin{equation}\label{XHqigBAD9N}
\textrm{from $w\oplus y, n$ compute $s_n$ such that $k^{z}_{s_n}$ is {\em used} iff $n\not\in\zj$}
\end{equation}
so  $k^{z}_{s_n}$ does not get $z$-permission.
In \S\ref{2cPoDTVyqX} permissions of $k_s^z$ at $s$ depended entirely on the value of $k_s^z$ so  
\eqref{XHqigBAD9N} was achieved by defining $s_n, z$ with $k_{s_n}^{z}=n$. 
This is no longer possible as  we do not have control over the stages $s$ where the $w^i$ appear in $U$.
The solution is to define permissions in terms of
\[
d^z_s:=\abs{\sqbrad{k^z_t}{t\leq s}}
\]
instead of $k_s^z$. This will allows to define the required $s_n, z$ for \eqref{XHqigBAD9N}.
 
{\bf Parameters.}
Given $z$ let  $k^z_s$ be the non-decreasing {\em counter} with  $k^z_0=0$ and 
\begin{equation*}
k^z_{s+1}:=\begin{cases}
s+1 &  \textrm{if $d^z_{s}\in \zj_{s}$ or $z^{d^z_{s}}\in U_s$} \\[0.13cm]
\ \ k^z_{s} &  \textrm{otherwise}
\end{cases}
\end{equation*}
where $d^z_{s}:=\abs{\sqbrad{t<s}{k^z_{t+1}\neq k^z_{t}}}$ counts the updates of $k^z$ and
\[
p^z_s:=\begin{cases}
s+1 &  \textrm{if $k^z_{s+1}=k^z_{s}$} \\[0.13cm]
\ \ k^z_{s} &  \textrm{otherwise}
\end{cases}
\]
enumerates $\Nat$, omitting $k^z_{s}$ as long as it is not updated and including it otherwise. 
Updates of  $k^z$ coincide with those of $d^z$ and are due to one of:
\begin{itemize}
\item $d^z_{s}\in \zj_{s}$ which we call  {\em $\zj$-permission} of $k^z_{s}$ at $s+1$
\item $z^{d^z_{s}}\in U_s$ which we call  {\em $z$-permission} of $k^z_{s}$ at $s+1$.
\end{itemize}
We say that $k^z_{s}$ {\em receives permission at $s+1$} if one of the above clauses hold.

Both $d^z, k^z$ are non-decreasing and $d^z$ increases by at most 1.
The reason that  $k^z$ is allowed to skip numbers is so that the range of $p^z$ misses at most one number $m$,
which happens exactly when $\lim_s k^z_s=m$.

So  $k^z_s$ becomes some $p^z_t$ iff  $d^z_s$ receives permission at a stage $t>s$.
\begin{lem}\label{CNpHUM8nIK}
Given $r\not\geqT \zj$ let  $y,  w$ be as in Lemma \ref{YCjtrEhy3n}.
Effectively in $y\oplus w$ and  $n$ we can  define $z$ and  $s$ such that
\begin{enumerate}[(i)]
\item $d^z_{s}=n$ and  $\parb{\lim_t k^z_t<\infty \iff \lim_t k^z_t=k^z_s \iff n\not\in\zj}$
\item $y\oplus z$ is weakly $r$-random.
\end{enumerate}
\end{lem}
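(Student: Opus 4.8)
The plan is to take the $z$ suggested just before the lemma and verify the two clauses separately, with the counter analysis carrying (i) and a van Lambalgen–style randomness transfer carrying (ii). First I would set $z^n:=y$ and $z^i:=w^i$ for $i\neq n$; this $z$ is computable from $y\oplus w$ and $n$, and by the choice of $U$ we have $z^i=w^i\in U$ for every $i\neq n$, while $z^n=y\notin U$. The stage $s$ will be the least one with $d^z_s=n$, and the whole argument for (i) hinges on the asymmetry that column $n$ is the unique column of $z$ lying outside $U$.

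For (i), recall that a counter value with $d^z_s=j$ is released (so $d^z$ increments) exactly when $j\in\zj_s$ or $z^j\in U_s$. Since $z^j=w^j\in U$ for every $j\neq n$, a prefix of $z^j$ is enumerated into $U$ at some finite stage, so whenever $d^z$ reaches such a $j$ it is eventually released by a $z$-permission; hence $d^z$ cannot freeze below $n$ and climbs to $n$ at a finite stage, which I take to be $s$. At $d^z=n$ the only possible release is a $\zj$-permission, because $z^n=y\notin U$ never enters $U_s$. Consequently, if $n\notin\zj$ the counter freezes, $d^z_t=n$ and $k^z_t=k^z_s$ for all $t\ge s$, giving $\lim_t k^z_t=k^z_s<\infty$; and if $n\in\zj$ then $n$ eventually enters $\zj_t$, $d^z$ passes $n$, and since every later column again lies in $U$ it keeps climbing, giving $\lim_t k^z_t=\infty$. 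This is precisely the three-way equivalence in (i). Finally, the map $s\mapsto(k^z_s,d^z_s)$ is computable from $z$ together with the enumerations $(\zj_s)$ and $(U_s)$, hence from $y\oplus w$ and $n$, so the least $s$ with $d^z_s=n$ is found effectively, its existence being exactly the ``climbs to $n$'' observation.

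For (ii) I would transfer weak $r$-randomness from $y\oplus w$, which is weakly $r$-random by Lemma \ref{YCjtrEhy3n}(i). Writing $\hat{w}$ for the join of the columns of $w$ other than the $n$th, a measure-preserving computable (in $n$) rearrangement carries $y\oplus w$ to the join of $y,w^n,\hat{w}$, and weak $r$-randomness is preserved both by such bijections and by forgetting the $w^n$ coordinate, since that projection sends a null $\Pi^0_1(r)$ class to the $\Pi^0_1(r)$ class of the same measure; this already yields that the join of $y$ with $\hat{w}$, and hence $z$ itself, is weakly $r$-random. The remaining and, I expect, most delicate point is to establish weak $r$-randomness of the actual probe real $y\oplus z$ fed to $g$. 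Here I would invoke that $y$ is weakly $2$-random relative to $w\oplus r$ (established inside the proof of Lemma \ref{YCjtrEhy3n}) together with van Lambalgen's theorem, viewing $y$ and $\hat{w}$ as relatively random over $r$, to argue that the join used to fool $g$ avoids every null $\Pi^0_1(r)$ class. I expect this transfer, rather than the counter bookkeeping, to be the main obstacle, precisely because $z$ is deliberately built to depend on the fixed reals $y$ and $w$, so it is the strong randomness of $y$ over $w\oplus r$ that must absorb that dependence.
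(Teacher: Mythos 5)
Your treatment of clause (i) is correct and is essentially the paper's own proof: the paper's argument consists of exactly this counter analysis — every value $d\neq n$ eventually receives $z$-permission because $z^d=w^d\in U$, the value $n$ never receives one because $z^n=y\notin U$, so $d^z$ reaches $n$ at an effectively computable stage $s$ and the counter freezes there if and only if $n$ never receives $\zj$-permission, i.e.\ iff $n\notin\zj$. Your subsidiary observation that $z$ alone is weakly $r$-random is also sound (and is what the paper asserts in the discussion preceding the lemma): the map $y\oplus w\mapsto z$ moves distinct input bits to distinct output positions, forgetting the column $w^n$, hence is computable and measure-preserving, so preimages of null $\Pi^0_1(r)$ classes are null $\Pi^0_1(r)$ classes.

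The gap is clause (ii), and you were right to flag the transfer to $y\oplus z$ as the delicate point — but no van Lambalgen argument can close it, because the claim in the form you (and the lemma) state it is false. The real $y\oplus z$ contains two copies of $y$: for every $i$ one has $(y\oplus z)(2i)=y(i)$ and $(y\oplus z)(2\langle n,i\rangle+1)=z^n(i)=y(i)$. Hence $y\oplus z$ lies in the class $\sqbrad{u\oplus v}{\forall i,\ v(\langle n,i\rangle)=u(i)}$, which is $\Pi^0_1$ (uniformly in $n$) and null, since it imposes infinitely many independent bit equalities on pairwise disjoint pairs of positions. So $y\oplus z$ is not even weakly $1$-random, let alone weakly $r$-random, and no strengthening of the randomness of $y$ over $w\oplus r$ can help: the obstruction is the syntactic duplication of $y$ inside the join, not a lack of mutual randomness. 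Notably, the paper's own proof of the lemma argues only (i) and is silent on (ii), so your instinct has located a genuine defect in the statement rather than a missing standard step. A repair must decorrelate the probe from the permission column: for instance, split $y=y_a\oplus y_b$, place $y_b$ (still random, hence outside some member $U$ of the universal test) into column $n$ of $z$, and probe $g$ with $y_a\oplus z$; that probe is a computable measure-preserving rearrangement of bits of $y\oplus w$, hence genuinely weakly $r$-random, and your counter analysis for (i) goes through verbatim.
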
\begin{proof}
Let $z$ be the real obtained from $w$ by replacing its $n$th column by $y$.
Since $\forall d,\ w^d\in U$ each $d\neq n$ will receive $z$-permission at some stage.
If $s$ is the stage where each $d<n$ have received permission, 
$d^z_{s}=n$. Since $z^n=y\not\in U$ it follows that  $n$ will never receive $z$-permission.
So $d^z$ gets stuck at $n$ if and only if $n$ does not receive $\zj$-permission. 
The required equivalence then follows, given that each $d>n$ receives $z$-permission.
\end{proof}
We are now ready to prove:

\begin{thm}\label{K1nykJEktW}
There is a total computable  $f:\twome\to\twome$ such that:
\begin{itemize}
\item $f$ is a two-to-one random-preserving surjection 
\item for each partial $g\not\geqT \zj$, with positive probability, $g$ fails to invert $f$
\end{itemize}
and the latter  holds for the restriction of $f$ in any cylinder $\dbra{\sigma}$.
\end{thm}\begin{proof}
Define the computable $f:\twome\to\twome$ by
\[
f(x\oplus z):=h^z(x)\oplus z
\hspace{0.3cm}\textrm{where}\hspace{0.3cm}
h^z(x; s):=x(p^z_s)
\]
so $h^z(x)$ outputs the bits of $x$ in some order determined by $(\zj_s)$, $z$ 
with the exception of $\lim_s k^z_s$ when this limit is finite. Then (i), (ii) of \S\ref{2cPoDTVyqX}  hold
and $f$ is two-to-one.  Since $s\mapsto p^z_s$ is injective,  Lemma \ref{VoqXmSWyYErel} shows that 
$f$  is a computable random-preserving surjection. 

Assuming $g:\subseteqm\twome\to\twome$ inverts $f$ with probability 1, we show that $g \geqT \zj$. 

For a contradiction assume $g \not\geqT \zj$, fix
$r\not\geqT\zj$ with $g\leqT r$  and
\[
\textrm{fix $y_0, v, w$ as in Lemma \ref{YCjtrEhy3n} for $y=y_0$, so $y_0\oplus w \oplus r\not\geqT \zj$.}
\]
Fix $n$. To decide if $n\in\zj$, from $y_0\oplus w \oplus r$ we 
\begin{itemize}
\item effectively compute $z,s$ as in Lemma \ref{CNpHUM8nIK}
\item let $k:=k^z_s$ be the potential finite limit of $k^{z}$
\end{itemize}
so  $y_0 \oplus z$ is weakly $r$-random, $d^z_s=n$ and 
$\forall t>s\ \parb{n\in \zj_t\impl p^{z}_{t}=k}$.

By the definition of $f$, for each $x,y$: 
\begin{equation}\label{e13u5DhHKT}
\parb{f(x\oplus z)=y\oplus z\wedga t>s\wedga n\in \zj_t}\implies y(t)=x(k).
\end{equation}
Since $y_0 \oplus  z$ is weakly $r$-random, by Lemma \ref{EVZuKrGCnN}  
we have $g(y_0\oplus z)\de$.

Letting $u$ be the oracle-use of $g(y_0\oplus z; 2k)\de$ we claim that
\begin{equation}\label{RjR2pYxv2i}
n\in\zj\iff n\in\zj_{\max\{u,s\}}.
\end{equation}
Otherwise there is $t> \max\{u, s\}$ with $n\in\zj_t-\zj_{t-1}$. Let $y_1$ be the real with 
\[
y_1(i)=y_0(i)\iff i\neq t.
\]
Then $y_1 \oplus z$ is weakly $r$-random, so by Lemma \ref{EVZuKrGCnN}, for $i=0,1$
\begin{equation}\label{bvoiOcCWY3a}
g(y_i\oplus z)\de
\hspace{0.3cm}\textrm{and}\hspace{0.4cm}
 f(g(y_i\oplus z))=y_i\oplus z.
\end{equation}
Since $t>u$ we have $(y_0\oplus z)\restr_u \prec y_1\oplus z$ so by \eqref{RjR2pYxv2i} we get
\begin{equation}\label{h5gCeDoGIwa}
g(y_0\oplus z; 2k)=g(y_1\oplus z; 2k).
\end{equation}
By \eqref{bvoiOcCWY3a}, \eqref{e13u5DhHKT} and $t>s$ we get 
\[
g(y_0\oplus z; 2k)=y_0(t)
\hspace{0.3cm}\textrm{and}\hspace{0.3cm}
g(y_1\oplus z; 2k)=y_1(t)
\]
which contradict \eqref{h5gCeDoGIwa} since $y_0(t)\neq y_1(t)$. 
This concludes the proof of \eqref{RjR2pYxv2i}.

By \eqref{RjR2pYxv2i} we get $\zj\leqT y_0\oplus w\oplus r$ which contradicts
the hypothesis that $y_0\oplus w \oplus r\not\geqT \zj$.
We conclude that $g \geqT \zj$ so every almost everywhere inversion of $f$ computes $\zj$.
The same argument applies in the case that $g$ inverts $f$ almost everywhere in a  
cylinder $\dbra{\sigma}$.
\end{proof}

\section{Conclusion}\label{UWucNoldLo}
We constructed a one-way  function on the reals, which is the analogue of the one-way functions in computational complexity 
formulated by Levin in \cite{levin24ZF}. 
We argued that Levin's definition is the correct analogue of the one-way functions from computational complexity,
over weaker alternatives. An analysis of  alternatives was conducted in \cite{owrand24}, along with a study 
of the oracles needed to probabilistically invert one-way real functions.

Our result was based on a general 
framework for constructing permutations of selected bits of the input, 
which was adapted toward understanding the extent to which
one-way functions can be injective. 
Applications often require additional ideas, but can yield analogues of significant properties  
in computational complexity, such as collision-resistance \cite{siblibarmpZhang}. 

Despite the non-trivial adaptations that are often required, 
all currently known one-way real functions (as defined in \cite{levin24ZF}) 
are permutations of selected bits of the input and can thus be viewed as applications of our framework.
The question therefore arises with respect to the generality of this framework. This is  relevant to  the 
existence of a partial computable one-way injection, which is currently unknown.

\end{document}